\renewcommand{\section}{\@startsection%
{section}%
{1}%
{0em}%
{1.7em}%
{1.2em}%
{\normalfont\large\centering\bfseries}}
\renewcommand{\@seccntformat}[1]%
{\csname the#1\endcsname.\hspace{0.5em}}
\numberwithin{equation}{section}
\newtheorem{theorem}{Theorem}[section]
\newtheorem{proposition}[theorem]{Proposition}
\newtheorem{lemma}[theorem]{Lemma}
\newtheorem{corollary}[theorem]{Corollary}
\theoremstyle{definition}
\newtheorem{definition}[theorem]{Definition}
\newtheorem{remark}[theorem]{Remark}
\newcommand{\ie}{\emph{i.\,e.}}
\newcommand{\cf}{\emph{cf.}}
\newcommand{\abs}[1]{\left|#1\right|}
\newcommand{\norm}[1]{\left\|#1\right\|}
\newcommand{\inner}[2]{\left\langle#1,#2\right\rangle}
\newcommand{\cc}[1]{\overline{#1}}
\newcommand{\reals}{\mathbb{R}}
\newcommand{\nats}{\mathbb{N}}
\newcommand{\complex}{\mathbb{C}}
\newcommand{\cH}{\mathcal{H}}
\renewcommand\tilde{\widetilde}
\renewcommand\hat{\widehat}
\newcommand{\I}{i}
\DeclareMathOperator{\dom}{dom}
\DeclareMathOperator{\ran}{ran}
\DeclareMathOperator*{\Span}{span}
\DeclareMathOperator{\card}{card}
\DeclareMathOperator{\ind}{ind}
\DeclareMathOperator{\clos}{clos}
\begin{document}
\begin{titlepage}
\title{On a criterion for the determinate-indeterminate dichotomy of
  the moment problem
\footnotetext{%
Mathematics Subject Classification(2010):
47B32;  
47B36; 
30E05. 
}
\footnotetext{%
  Keywords:
 Hamburger moment problem;
Bases of matrix representation;
Jacobi operators.
}
\hspace{-8mm}
}
\author{
  \textbf{Diego Hernández Bustos}
\\
\small Departamento de Física Matemática\\[-1.6mm]
\small Instituto de Investigaciones en Matemáticas Aplicadas y en Sistemas\\[-1.6mm]
\small Universidad Nacional Autónoma de México\\[-1.6mm]
\small C.P. 04510, Ciudad de México\\[-1.6mm]
\small \texttt{diego.hernandez@iimas.unam.mx}
\\[2mm]
\textbf{Sergio Palafox}
\\
\small Instituto de Física y Matemáticas\\[-1.6mm]
\small Universidad Tecnológica de la Mixteca\\[-1.6mm]
\small C.P. 69000, Huajuapan, Oaxaca, México\\[-1.6mm]
\small \texttt{sergiopalafoxd@gmail.com}
\\[2mm]
\textbf{Luis O. Silva}
\\
\small Departamento de Física Matemática\\[-1.6mm]
\small Instituto de Investigaciones en Matemáticas Aplicadas y en Sistemas\\[-1.6mm]
\small Universidad Nacional Autónoma de México\\[-1.6mm]
\small C.P. 04510, Ciudad de México\\[-1.6mm]
\small \texttt{silva@iimas.unam.mx}}
\date{}
\maketitle
\vspace{-4mm}
\begin{center}
\begin{minipage}{5in}
  \centerline{{\bf Abstract}} \bigskip When the classical Hamburger
  moment problem has solutions, it has either exactly one solution or
  infinitely many solutions. Correspondingly, the moment problem is said
  to be either determinate or indeterminate. In terms of Jacobi
  operators, this dichotomy translates into the operator being either
  selfadjoint or symmetric nonselfadjoint. In this work, we present a
  new criterion for the determinate-indeterminate classification which
  hinges on bases of representation (in Akhiezer-Glazman terminology) for
  Jacobi operators so that the corresponding matrices have a certain
  structure.
\end{minipage}
\end{center}
\thispagestyle{empty}
\end{titlepage}
\section{Introduction}
\label{sec:intro}
The classical Hamburger moment problem has played a central role in
the development of modern mathematical analysis. It consists in
finding a Borel measure $\mu$ such that
\begin{equation*}
  s_{k}=\int_{\reals}t^{k}d\mu\,.
\end{equation*}
for a given real sequence of numbers $\{s_{k}\}_{k=0}^{\infty}$. To
exclude from our consideration the trivial, degenerate solutions to
the moment problem, let us always assume that the solutions have
infinite support. This deceptively simple problem leads to fundamental
questions in various fields of analysis and reveals unexpected
connections between seemingly unrelated theories and notions. When the
classical Hamburger moment problem has solutions, it has either
exactly one solution or infinitely many solutions. In the first case,
the moment problem is said to be determinate, while in the second case
it is said to be indeterminate. This dichotomy is crucial within the
moment problem theory; on the one hand, the criteria shed light on the
intricacies of the theory and interconnections between fields of
analysis, and on the other hand, the determinate and indeterminate
cases lead to two different facets of the theory.

Due to the inherent richness of the moment problem, one can approach
the determinate-indeterminate dichotomy from different viewpoints
using different mathematical notions and, consequently, there are
numerous criteria for finding out whether the moment problem is
determinate or indeterminate. There is a nonexhaustive list of these
criteria at the end of Section~\ref{sec:matr-repr-clos}, although this
is not the main point in this section, but rather the consequences of the
one-to-one correspondence between Jacobi matrices and sequences of
moments for which the corresponding moment problems admit solutions. On
the basis of this correspondence, the determinate-indeterminate
dichotomy is transformed into the selfadjoint-nonselfadjoint dichotomy
for Jacobi operators. For the passing from matrices to operators, the
concept of matrix representation for unbounded closed symmetric
operators \cite[Sec.\,47]{MR1255973} is essential.

The criterion presented in this paper is actually an if-and-only-if
criterion for resolving the selfadjoint-nonselfadjoint dichotomy for
Jacobi operators, however it does not rely on the operator theory
techniques nor on the function theoretic methods for establishing
selfadjointness or nonselfadjointness (\cf\ \cite[Chs.\,3 and
4]{MR0184042}, \cite[Ch.\,7 Sec.\,1]{MR0222718}, \cite{MR1627806} and
\cite[Ch.\,2]{MR2961944}). Instead, we use the so-called bases of
representation for Jacobi operators
(Definition~\ref{def:basis-representation}) and the result can be
stated exclusively in terms of these bases, namely:
\begin{itemize}
  \item
If for a Jacobi operator there is more than one basis of matrix
representation so that the corresponding matrix representation is a
Jacobi matrix, then the operator is selfadjoint.
\item
If for a
Jacobi operator there is only one basis of matrix representation so that
the corresponding matrix representation is a Jacobi matrix, then the
operator is nonselfadjoint.
\end{itemize}

Apart from presenting a new criterion for the
selfadjoint-nonselfadjoint dichotomy, the aim of this work is to shed
light on the relationship between sequences of moments, Jacobi
operators, matrices of representation, and measures. As a byproduct,
necessary and sufficient conditions for a basis to be a basis of
representation for a Jacobi operator are provided. Furthermore, it is
shown how to construct a basis of matrix representation so that the
corresponding measure has arbitrary index of determinacy.

Let us outline how the material of this work is presented.
Section~\ref{sec:matr-repr-clos} introduces the main objects and the
corresponding notation. This section is expository and presents
classical results on the Hamburger moment problem and its relation to
the theory of Jacobi matrices. Section~\ref{sec:self-simpl-oper} is a
review of the theory of selfadjoint simple operators and tackles the
problem of constructing bases of matrix representation for these
operators. The index of determinacy and the connection to an algorithm
to construct bases of matrix representation for selfadjoint Jacobi
operators are given in Section~\ref{sec:index-determ-bases}. Finally,
Section~\ref{sec:non-self-jacobi} deals with the case of
nonselfadjoint Jacobi operators. This section uses Krein
representation theory of symmetric operators
\cite{MR0011170,MR0011533,MR0012177,MR0048704} and de Branges theory
on Hilbert spaces of entire functions \cite{MR0229011}.

\section{Jacobi matrices and the Hamburger moment problem}
\label{sec:matr-repr-clos}
Let us introduce the notions relevant to this paper and lay out the
notation. Consider a closed symmetric operator $A$ in a Hilbert space
$\mathcal{H}$ and an orthonormal basis $\{\delta_{k}\}_{k=1}^{\infty}$
of $\mathcal{H}$. If the domain of $A$, denoted by $\dom A$,
coincides with the whole space $\mathcal{H}$ (which implies that $A$
is bounded since we have assumed it to be closed), then the operator
can be uniquely recovered from the numbers
\begin{equation}
  \label{eq:matrix-entries-from-operator}
  a_{kj}:=\inner{\delta_{k}}{A\delta_{j}}\,;
\end{equation}
here and henceforth the inner product is considered to be antilinear
in its first argument. If $\dom A\varsubsetneq\mathcal{H}$, then the
operator $A$ is not reconstructed uniquely from
\eqref{eq:matrix-entries-from-operator} even when
$\delta_{k}\in\dom A$ for any $k\in\nats$ ($\nats$ denotes the set of
positive integers). For this reason, one needs the following:
\begin{definition}
  \label{def:basis-representation}
  An orthonormal basis $\{\delta_{k}\}_{k=1}^{\infty}$ is said to be a
  basis of representation for the closed operator $A$ when
  \begin{enumerate}[(a)]
  \item $\delta_{k}\in\dom A$ for all $k\in\nats$;
  \item if there is a closed operator $B$ such that
    $B\delta_{k}=A\delta_{k}$, then $B\supset A$.
  \end{enumerate}
  When $\{\delta_{k}\}_{k=1}^{\infty}$ is a
  basis of representation for $A$, the matrix
  \begin{equation}
    \label{eq:general-infinite-matrix}
    [A]=
      \begin{pmatrix}
        a_{11} & a_{12} & a_{13}  &  a_{14}  &  \cdots\\[1mm]
        a_{21} & a_{22} & a_{23} & a_{24} & \cdots \\[1mm]
        a_{31}  &  a_{32}  & a_{33}  & a_{34} &  \\
        a_{41} & a_{42} & a_{43} & a_{44} & \ddots\\
        \vdots & \vdots &  & \ddots & \ddots
  \end{pmatrix}\,,
  \end{equation}
  with entries given by \eqref{eq:matrix-entries-from-operator}, is the
  matrix representation of $A$ with respect to
  $\{\delta_{k}\}_{k=1}^{\infty}$.
\end{definition}
 In \cite[Sec.\,47, Thm.\,3]{MR1255973}, it is established that any
 closed symmetric operator has a basis of representation. Conversely,
 if the matrix \eqref{eq:general-infinite-matrix} is Hermitian and satisfies
 \begin{equation}
   \label{eq:condition-on-matrix-for-operator}
   \sum_{j=1}^{\infty}\abs{a_{jk}}^{2}<+\infty\,, \text{ for all }k\in\nats\,,
 \end{equation}
then there is a unique closed symmetric operator $A$ such that $[A]$ is its
matrix representation with respect to a given  orthonormal basis
$\{\delta_{k}\}_{k=1}^{\infty}$ of a Hilbert space
$\mathcal{H}$.

 Let $\{q_{k}\}_{k=1}^{\infty}$ be a sequence of real numbers and
 $\{b_{k}\}_{k=1}^{\infty}$ be a sequence of positive numbers.
 An infinite matrix of the form
 \begin{equation}
   \label{eq:jacobi-matrix}
   [J]=
     \begin{pmatrix}
    q_1 & b_1 & 0  &  0  &  \cdots
\\[1mm] b_1 & q_2 & b_2 & 0 & \cdots \\[1mm]  0  &  b_2  & q_3  &
b_3 &  \\
0 & 0 & b_3 & q_4 & \ddots\\ \vdots & \vdots &  & \ddots
& \ddots
  \end{pmatrix}
 \end{equation}
 is said to be an infinite Jacobi matrix, or more specifically a
 semi-infinite Jacobi matrix to emphasize that the diagonals are
 enumerated by $\nats$ rather than $\mathbb{Z}$. Since this matrix
 satisfies \eqref{eq:condition-on-matrix-for-operator}, upon fixing an
 orthonormal basis $\{\delta_{k}\}_{k=1}^{\infty}$ of a Hilbert space
 $\mathcal{H}$, there is a unique closed symmetric operator $J$,
 called Jacobi operator, having
 $[J]$ as its matrix representation with respect to
 $\{\delta_{k}\}_{k=1}^{\infty}$. Usually, one takes
 $\mathcal{H}=l_{2}(\nats)$ and $\{\delta_{k}\}_{k=1}^{\infty}$ being
 the so-called canonical basis of $l_{2}(\nats)$, \ie\ $\delta_{k}$ is
 in turn the sequence $\{\delta_{jk}\}_{j=1}^{\infty}$, where
 $\delta_{jk}$ is the Kronecker delta. Henceforth, we assume that
 these choices for the space and the orthonormal basis are always
 made.

 Thus, the operator $J$ is the
 closure of the operator $J_{0}$ whose domain is $l_{\rm fin}(\nats)$
 (the space of sequences with a finite number of nonzero elements) and
 satisfies
\begin{equation}
  \label{eq:difference-expr}
  \begin{split}
      (J_{0} \phi)_1&:=q_1 \phi_1 + b_1 \phi_2\,,\\
     (J_{0} \phi)_k&:= b_{k-1}\phi_{k-1} + q_k \phi_k + b_k\phi_{k+1}\,,
  \quad k \in \mathbb{N} \setminus \{1\},
\end{split}
\end{equation}
for any $\phi\in l_{\rm fin}(\nats)$. Also, one verifies that
$J^{*}=J_{0}^{*}$ is the operator defined on the maximal domain, \ie,
\begin{equation}
  \label{eq:domain-maximal}
  \dom J^{*}=\{\phi\in l_{2}(\nats): \sum_{k=2}^{\infty}\abs{
    b_{k-1}\phi_{k-1} + q_k \phi_k + b_k\phi_{k+1}}^{2}<+\infty \}\,.
\end{equation}

By setting $\pi_{1}:=1$, a solution to the equations
\begin{equation}
  \label{eq:difference-equation}
  \begin{split}
      z\pi_1&:=q_1 \pi_1 + b_1 \pi_2\,,\\
     z\pi_k&:= b_{k-1}\pi_{k-1} + q_k \pi_k + b_k\pi_{k+1}\,,
  \quad k \in \mathbb{N} \setminus \{1\},\quad z\in\complex\,,
\end{split}
\end{equation}
can be found uniquely by recurrence. This solution,
$\pi(z)=\{\pi_{k}(z)\}_{k=1}^{\infty}$, is a sequence of polynomials of
$z$ called the polynomials of the first kind generated by $[J]$.
\begin{remark}
  \label{rem:on-kernel-adjoint-at-zeta}
  Since the polynomials' coefficients are real, if
  $\pi(z)\in l_{2}(\nats)$, then $\pi(\cc{z})\in l_{2}(\nats)$. Also,
  it follows from~\eqref{eq:domain-maximal} and
  \eqref{eq:difference-equation} that $\pi(z)\in l_{2}(\nats)$ if and
  only if $\pi(z)\in\ker(J^{*}-zI)$. This means on the one hand that
  the deficiency indices of the symmetric operator $J$ are always
  equal to each other, \ie\ $n_{+}(J)=n_{-}(J)$ and, on the other
  hand, if $\pi(z)\in l_{2}(\nats)$ for one nonreal $z$, then this is
  true for any nonreal $z$. When $\pi(z)\in l_{2}(\nats)$, the
  deficiency indices are equal to one because any other solution of
  \eqref{eq:difference-equation} coincides with $\pi(z)$ modulo a
  multiplicative constant (see \cite[Ch.\,4 Sec.\,1.2]{MR0184042}).
  Thus, either $n_{+}(J)=n_{-}(J)=0$ or $n_{+}(J)=n_{-}(J)=1$. Since
  $J$ is closed by definition, the case when $n_{+}(J)=n_{-}(J)=0$
  corresponds to $J$ being selfadjoint.
\end{remark}

The \emph{second order difference} expression
\eqref{eq:difference-expr} (\ie\ the matrix \eqref{eq:jacobi-matrix})
may be either in the limit point case or in the limit circle case. The
asymptotic behavior of the sequence of Weyl circles determines the
occurrence of one of these two possibilities since either the circles
degenerate into a single point or a limit circle \cite[Ch.\,1
Sec.\,3]{MR0184042}. For the class of \emph{second order differential}
expressions pertaining to the Sturm-Liouville operator, the same
dichotomy between the limit point and limit circle cases takes place
\cite[Ch.\,9]{MR0069338}. Actually, the theory behind the Weyl circles
originated in the context of differential equations.

It turns out that the limit point case corresponds to the
selfadjoint case, \ie\ $n_{+}(J)=n_{-}(J)=0$, while the limit circle
case occurs when $n_{+}(J)=n_{-}(J)=1$. This correspondence is evident
from the following expression \cite[Eq.\,1.21]{MR0184042}
\begin{equation}
  \label{eq:radii-weyl-circles}
  \left(\abs{z-\cc{z}}\sum_{k=1}^{n}\abs{\pi_{k}(z)}^{2}\right)^{-1}\,,
\end{equation}
which gives the $n$-th Weyl circle's radius for
$z\in\complex\setminus\reals$. Indeed, by von Neumann extension theory
and Remark~\ref{rem:on-kernel-adjoint-at-zeta}, selfadjointness of $J$
is equivalent to the radius vanishing as $n\to\infty$ in
\eqref{eq:radii-weyl-circles} since $\pi(z)\not\in l_{2}(\nats)$ for
$z\in\complex\setminus\reals$, while nonselfadjointness of $J$ means
that the limit of the sequence of radii \eqref{eq:radii-weyl-circles}
is not zero since, in this case, $\pi(z)\in l_{2}(\nats)$ for
$z\in\complex\setminus\reals$.

Let us now turn to the moment problem posed at the beginning of
Section~\ref{sec:intro}. A necessary and sufficient condition for a
solution to the Hamburger moment problem to exist
\cite[Thm.\,2.1.1]{MR0184042} is that
\begin{equation}
  \label{eq:positivity-of-sequence}
  \det
  \begin{pmatrix}
    s_0&s_1&\dots&s_k\\
    s_1&s_2&\dots&s_{k+1}\\
    \vdots&\vdots&\ddots&\vdots\\
    s_k&s_{k+1}&\dots&s_{2k}
  \end{pmatrix}>0
\end{equation}
for all $k\in\nats\cup\{0\}$.

For a sequence $\{s_{n}\}_{n=0}^{\infty}$ satisfying
\eqref{eq:positivity-of-sequence} there is either one solution or more
than one solution to the Hamburger moment problem. In the first case,
the moment problem is said to be determinate, while in the second
case, it is called indeterminate.

As is customary, it is assumed in this paper that the sequence of
moments $\{s_{n}\}_{n=0}^{\infty}$ is normalized, \ie\ $s_{0}=1$. This
involves no loss of generality since the general case reduces to the
normalized one by dividing the sequence of moments and its solution by
$s_{0}$.

There is a one-to-one correspondence between Jacobi matrices
\eqref{eq:jacobi-matrix} and normalized sequences $\{s_{n}\}_{n=0}^{\infty}$
satisfying \eqref{eq:positivity-of-sequence} (see
\cite[Ch.\,1]{MR0184042}). Moreover, this bijection pairs every limit
point Jacobi matrix with a sequence for which the Hamburger moment
problem is determinate and every limit circle Jacobi matrix with a
sequence for which the Hamburger moment problem is indeterminate
\cite[Thm. 2.1.2 and Cor. 2.2.4]{MR0184042}.

  Let us briefly describe how the above mentioned one-to-one
  correspondence is realized. First, consider the starting point to be
  an operator $J$ having the matrix representation
  \eqref{eq:jacobi-matrix} with respect to the orthonormal basis
  $\{\delta_{k}\}_{k=1}^{\infty}$. Since
\begin{equation*}
  \begin{split}
      J\delta_1&=q_1 \delta_1 + b_1 \delta_2\,,\\
     J\delta_k&= b_{k-1}\delta_{k-1} + q_k \delta_k + b_k\delta_{k+1}\,,
  \quad k \in \mathbb{N} \setminus \{1\}\,,
\end{split}
\end{equation*}
it is
verified that
\begin{equation}
  \label{eq:polynomials-show-simplicity}
  \delta_{k}=\pi_{k}(J)\delta_{1}\,.
\end{equation}
This means that $\delta_{1}$ is in the domain of any power of the
Jacobi operator $J$. Thus, if one defines
$s_{k-1}:=\inner{\delta_{1}}{J^{k-1}\delta_{1}}$ for all $k\in\nats$,
then a solution to the corresponding moment problem is given by the
measure
\begin{equation}
  \label{eq:extremal-measure}
  \mu(\cdot):=\inner{\delta_{1}}{E(\cdot)\delta_{1}}\,,
\end{equation}
where $E$ is either the spectral measure of $J$ if it is selfadjoint
or the spectral measure of any of the canonical selfadjoint
extensions\footnote{A canonical selfadjoint extensions of a symmetric
  operator is a selfadjoint restriction of its adjoint.} of $J$
otherwise. Hence, $\{s_{k}\}_{k=0}^{\infty}$ is a sequence of moments
and the nonselfadjoint case yields different solutions to the
corresponding moment problem. This conclusion is complemented in the
classical moment problem theory by showing, on the one hand that if
$J$ is selfadjoint, then $\mu$ is the unique solution of the moment
problem \cite[Cor.\,2.2.4]{MR0184042} and, on the other hand, that
there are other solutions apart from the ones given by the canonical
selfadjoint extensions of the nonselfadjoint Jacobi operator (see
\cite[Ch.\,2 Secs. 2 and 3]{MR0184042} and \cite[Thm.\,4]{MR1627806}).

Now, let the starting point be any normalized sequence of moments. In
this case it is known that one can construct from this sequence a
unique Jacobi matrix using the determinantal formulae (see
\cite[Ch.\,1 Sec.\,1]{MR0184042} and \cite[Thm.\,A.2]{MR1627806}). The
corresponding Jacobi operator $J$ in $l_{2}(\nats)$ turns out to be
such that $s_{k-1}=\inner{\delta_{1}}{J^{k-1}\delta_{1}}$ for all
$k\in\nats$. According to the contraposition of
\cite[Th.\,2.1.2]{MR0184042}, if the sequence of moments gives rise to
a determinate moment problem, then the Jacobi matrix is in the limit
point case. On the other hand, the contraposition of
\cite[Cor.\,2.2.4]{MR0184042} asserts that when the moment problem is
indeterminate, the Jacobi matrix is in the limit circle case.

Thus, the Hamburger moment problem is determinate if and only if the
corresponding Jacobi matrix is in the limit point case, which in turn
means that the Jacobi operator $J$ is selfadjoint. Complementarily,
the fact that the Hamburger moment problem is indeterminate is
equivalent to the corresponding Jacobi matrix being in the limit
circle case, \ie\ $J$ is not selfadjoint.

Other if-and-only-if criteria are: (a) the finite difference analogue
\cite[Thm.\,1.3.1]{MR0184042} of the Weyl alternative for
Sturm-Liouville operators \cite[Ch.\,9]{MR0069338} (related to the
limit circle/point dichotomy and the presence/absence of uniqueness of
the Weyl $m$-coefficient), (b) the Hamburger criterion (given in
terms of the moment sequence $\{s_{k}\}_{k=0}^{\infty}$) \cite[Addenda
and problems of Ch.\,2]{MR0184042}. This list is not exhaustive, but
all the criteria found in the literature boil down directly or
indirectly to the properties of the sequence $\pi(z)$.
\begin{remark}
  For the Stieltjes moment problem \cite[Pag.\,83]{MR1627806}, the
  de\-ter\-mi\-nate/in\-de\-ter\-mi\-nate dichotomy reduces to the
  existence of one/multiple nonnegative selfadjoint extensions of the
  corresponding Jacobi operator \cite[Thms.\,2 and 3.2]{MR1627806}.
  This paper is not concerned with the Stieltjes moment problem.
\end{remark}
This section concludes with an overview of the isometry map associated
with the moment problem which will be of use in the next sections.
By \cite[Thms.\,2.3.3 and 4.1.4]{MR0184042}, one has the following
classical result.
\begin{proposition}
  \label{prop:density-polynomials}
  The polynomials are dense in $L_{2}(\reals,\mu)$ if and only if
  $\mu$ is the measure given by \eqref{eq:extremal-measure}.
\end{proposition}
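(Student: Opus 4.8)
The proof will rest on the canonical isometry $V\colon l_{2}(\nats)\to L_{2}(\reals,\mu)$ determined by $V\delta_{k}:=\pi_{k}$, where $\{\pi_{k}\}_{k=1}^{\infty}$ is the sequence of polynomials of the first kind generated by $[J]$. For \emph{any} solution $\mu$ of the moment problem the inner products $\int_{\reals}\pi_{j}\pi_{k}\,d\mu$ are fixed by the moments $\{s_{n}\}$ alone (the $\pi_{k}$ being polynomials), and under the one-to-one correspondence between Jacobi matrices and moment sequences the $\pi_{k}$ are exactly the orthonormal polynomials of $\mu$; hence $V\delta_{k}=\pi_{k}$ does define an isometry, whose range is the closure of the polynomials in $L_{2}(\reals,\mu)$. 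Consequently the polynomials are dense in $L_{2}(\reals,\mu)$ if and only if $V$ is unitary, and it remains to prove that $V$ is unitary if and only if $\mu$ equals the measure \eqref{eq:extremal-measure}.

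First I would show that if $\mu$ is given by \eqref{eq:extremal-measure} then $V$ is unitary. Let $\tilde J$ be the canonical selfadjoint extension of $J$ whose spectral measure $E$ produces $\mu$ via \eqref{eq:extremal-measure} (with $\tilde J=J$ when $J$ is selfadjoint). Since $J\subset\tilde J$ gives $J^{n}\subset\tilde J^{n}$ for all $n$, and $\delta_{1}\in\dom J^{n}$ for all $n$ by \eqref{eq:polynomials-show-simplicity}, one gets $\pi_{k}(\tilde J)\delta_{1}=\pi_{k}(J)\delta_{1}=\delta_{k}$, so $\delta_{1}$ is a cyclic vector for $\tilde J$. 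The spectral theorem for a cyclic selfadjoint operator then yields a unitary map $\Phi\colon l_{2}(\nats)\to L_{2}(\reals,\mu)$ with $\Phi\,p(\tilde J)\delta_{1}=p$ for every polynomial $p$ and $\Phi\delta_{1}=1$; applying $\Phi$ to $\delta_{k}=\pi_{k}(\tilde J)\delta_{1}$ gives $\Phi\delta_{k}=\pi_{k}=V\delta_{k}$, hence $\Phi=V$ and $V$ is unitary.

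For the converse, suppose $V$ is unitary and let $M_{t}$ be the selfadjoint operator of multiplication by the independent variable in $L_{2}(\reals,\mu)$, whose domain contains every polynomial. Set $\tilde J:=V^{-1}M_{t}V$. Because $t\,\pi_{k}(t)$ obeys exactly the recurrence \eqref{eq:difference-equation}, a direct computation gives $\tilde J\delta_{k}=J_{0}\delta_{k}$ for every $k$, so the selfadjoint operator $\tilde J$ extends $\overline{J_{0}}=J$; being symmetric, $\tilde J=\tilde J^{*}\subset J^{*}$, so $\tilde J$ is a canonical selfadjoint extension of $J$. Finally, $V\tilde J=M_{t}V$ passes to spectral projections as $VE_{\tilde J}(\Omega)=\chi_{\Omega}(M_{t})V$, and since $V\delta_{1}=\pi_{1}=1$ we obtain $\inner{\delta_{1}}{E_{\tilde J}(\Omega)\delta_{1}}=\int_{\Omega}d\mu=\mu(\Omega)$; thus $\mu$ is the measure \eqref{eq:extremal-measure}. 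Together with the equivalence ``polynomials dense $\Leftrightarrow$ $V$ unitary'' this proves the proposition.

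The step that I expect to require the most care is the very first one: that $V$ is an isometry for \emph{every} solution $\mu$, including the non-extremal ones in the indeterminate case. This is where the determinantal construction of the Jacobi matrix from the moments enters, guaranteeing that the $\pi_{k}$ coincide with the orthonormal polynomials of each solution; once this is in place, the two implications are routine applications of the spectral theorem for a selfadjoint operator with a cyclic vector.
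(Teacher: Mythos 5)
Your proof is correct, but note that the paper does not actually prove this proposition: it is quoted as a classical result with a bare reference to Akhiezer's Theorems 2.3.3 and 4.1.4, so there is no in-paper argument to compare against, and what you have written is a sound, self-contained reconstruction of the classical characterization of the von Neumann (N-extremal) solutions. The point you single out as delicate is indeed the one that needs saying: Proposition~\ref{prop:three-term-relation} is stated in the paper only for the extremal measure \eqref{eq:extremal-measure}, so the fact that the $\pi_{k}$ are orthonormal in $L_{2}(\reals,\mu)$ for \emph{every} solution $\mu$ must be extracted from the observation that Gram--Schmidt applied to the monomials uses only the numbers $\int t^{j+k}d\mu=s_{j+k}$, which are common to all solutions; since $\deg\pi_{k}=k-1$ with nonzero leading coefficient, $\ran V$ is then exactly the closure of the polynomials and the reduction to ``$V$ unitary'' is legitimate. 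The forward implication is Remark~\ref{rem:delta-1-cyclic} combined with the canonical representation \eqref{eq:psy-def}. In the converse, the key steps are all justified: $V^{-1}M_{\mu}V$ is selfadjoint, agrees with $J_{0}$ on $l_{\rm fin}(\nats)$ by the three-term recurrence, hence contains $J=\overline{J_{0}}$ and, being selfadjoint, satisfies $\tilde J=\tilde J^{*}\subset J^{*}$, i.e.\ it is a canonical extension in the sense of the footnote to \eqref{eq:extremal-measure}; the identity $\mu(\Omega)=\inner{\delta_{1}}{E_{\tilde J}(\Omega)\delta_{1}}$ then follows from $V\delta_{1}=\pi_{1}=1$ and the transport of spectral projections under the unitary $V$. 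I see no gaps.
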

If one assumes that $\mu$ is given by
\eqref{eq:extremal-measure}, then the sequence of monomials
$\{t^{k-1}\}_{k=1}^{\infty}$ is total in $L_{2}(\reals,\mu)$. So, till
the end of this section, $\mu$ is assumed to be given this way. Now,
the so-called Gram-Schmidt procedure applied to
$\{t^{k-1}\}_{k=1}^{\infty}$ yields an orthonormal basis
$\{P_{k-1}(t)\}_{k=1}^{\infty}$ in $L_2(\reals,\mu)$. The Gram-Schmidt
procedure is assumed in this paper to be defined as in \cite[Ch.\,2
Sec.\,2 Thm. 5]{MR1192782} which implies that $P_{k}$ is a
polynomial of degree $k$ with positive leading coefficient. The
orthonormal sequence of polynomials is uniquely determined by these
properties \cite[Prop.\,5.1]{MR3729411}. One has the well-known
three-term relation theorem (see \cite[Sec.\,3.1.3]{MR2085852},
\cite[Prop.\,5.6]{MR3729411}, \cite[Pag.\,92]{MR1627806}):
 \begin{proposition}
   \label{prop:three-term-relation}
   If $\{P_{k-1}(t)\}_{k=1}^{\infty}$ is the orthonormal sequence of
   polynomials defined above, then
   \begin{equation}
  \label{eq:difference-equation-polynomials}
  \begin{split}
      tP_0(t)&:=q_1 P_0(t) + b_1 P_1(t)\,,\\
     tP_k(t)&:= b_{k}P_{k-1}(t) + q_{k+1} P_k(t) + b_{k+1}P_{k+1}(t)\,,
  \quad k \in \mathbb{N}\,,\quad t\in\reals\,,
\end{split}
\end{equation}
where the sequences $\{q_{k}\}_{k=1}^{\infty}$ and $\{b_{k}\}_{k=1}^{\infty}$ are
obtained from the moments $\{s_{k-1}\}_{k=1}^{\infty}$
by means of the determinantal formulae mentioned above.
\end{proposition}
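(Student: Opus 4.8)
The plan is to identify the orthonormal polynomials $\{P_{k-1}(t)\}_{k=1}^{\infty}$ with the polynomials of the first kind $\{\pi_{k}(z)\}_{k=1}^{\infty}$ generated by $[J]$, and then to read off \eqref{eq:difference-equation-polynomials} from \eqref{eq:difference-equation} after a shift of indices. The starting observations are that $\delta_{1}$ lies in $\dom(J^{l})$ for every $l\in\nats\cup\{0\}$, with $\delta_{k}=\pi_{k}(J)\delta_{1}$ by \eqref{eq:polynomials-show-simplicity}, and that $s_{l}=\inner{\delta_{1}}{J^{l}\delta_{1}}=\int_{\reals}t^{l}\,d\mu$ for all such $l$, where $\mu$ is the measure \eqref{eq:extremal-measure} under the standing assumption of this section.

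The key step is the identity $\inner{p(J)\delta_{1}}{q(J)\delta_{1}}=\int_{\reals}p(t)q(t)\,d\mu(t)$ for any two polynomials $p,q$ with real coefficients. I would prove it by expanding $p$ and $q$ in monomials, using that $\delta_{1}\in\dom(J^{l})$ for all $l$ together with the symmetry of $J$ to transfer every power of $J$ onto the right-hand argument; this reduces the inner product to a finite linear combination of the scalars $\inner{\delta_{1}}{J^{l}\delta_{1}}=s_{l}$, which one recognizes as $\int_{\reals}p\,q\,d\mu$. Applying this with $p=\pi_{j}$, $q=\pi_{k}$ and using $\pi_{k}(J)\delta_{1}=\delta_{k}$ gives $\int_{\reals}\pi_{j}(t)\pi_{k}(t)\,d\mu(t)=\inner{\delta_{j}}{\delta_{k}}=\delta_{jk}$, so $\{\pi_{k}\}_{k=1}^{\infty}$ is an orthonormal system in $L_{2}(\reals,\mu)$.

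It remains to see that this system is exactly $\{P_{k-1}\}_{k=1}^{\infty}$. From \eqref{eq:difference-equation} one checks by induction that $\pi_{k}$ has degree $k-1$ and positive leading coefficient (indeed $\pi_{1}\equiv1$, and the leading coefficient of $\pi_{k+1}$ equals $b_{k}^{-1}$ times that of $\pi_{k}$, with $b_{k}>0$), so $\Span\{\pi_{k}:k\le n\}$ is the space of polynomials of degree at most $n-1$. Since, by Proposition~\ref{prop:density-polynomials}, the polynomials are dense in $L_{2}(\reals,\mu)$ for the measure \eqref{eq:extremal-measure}, the sequence $\{\pi_{k}\}_{k=1}^{\infty}$ is a complete orthonormal system of polynomials with $\deg\pi_{k}=k-1$ and positive leading coefficients, i.e.\ it has the very properties characterizing $\{P_{k-1}\}_{k=1}^{\infty}$; by the uniqueness recalled in \cite[Prop.\,5.1]{MR3729411} we conclude $\pi_{k}=P_{k-1}$ for all $k\in\nats$. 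Substituting $\pi_{k}=P_{k-1}$ into \eqref{eq:difference-equation} and replacing $z$ by $t$ (legitimate, both sides being polynomials agreeing on $\complex$) yields \eqref{eq:difference-equation-polynomials} after relabeling $k\mapsto k+1$ in the second line; the coefficients $q_{k},b_{k}$ occurring there are the entries of $[J]$, which by the one-to-one correspondence recalled before the statement are precisely those produced from $\{s_{k-1}\}_{k=1}^{\infty}$ by the determinantal formulae. (One could instead obtain the three-term form directly by expanding $tP_{k}$ in the basis $\{P_{j}\}_{j\le k+1}$ and using orthogonality, then match the coefficients; but the route through $\pi_{k}$ is shorter and gives the determinantal identification for free.)

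I expect the only delicate point to be the bookkeeping in the key identity: one must check that every manipulation stays inside $\dom(J^{l})$ before invoking the symmetry of $J$, so that no unbounded-operator subtlety is swept under the rug; in particular one never needs a functional calculus for $J$ itself, only the elementary fact that $\delta_{1}$, hence each $\delta_{k}$, lies in the domain of every power of $J$. A minor point worth stating explicitly is that in the nonselfadjoint case \eqref{eq:extremal-measure} depends on the chosen canonical selfadjoint extension, but the argument uses only the moment identity $s_{l}=\int_{\reals}t^{l}\,d\mu$, which holds for every such choice, so no ambiguity remains.
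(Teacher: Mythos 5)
The paper itself gives no proof of Proposition~\ref{prop:three-term-relation}: it is stated as a classical fact with pointers to \cite[Sec.\,3.1.3]{MR2085852}, \cite[Prop.\,5.6]{MR3729411} and \cite[Pag.\,92]{MR1627806}, where the standard argument is the one you mention only in passing at the end --- expand $tP_{k}$, a polynomial of degree $k+1$, in the orthonormal system $\{P_{j}\}_{j=0}^{k+1}$, use $\inner{P_{j}}{tP_{k}}=\inner{tP_{j}}{P_{k}}=0$ for $j\le k-2$, and then match the surviving coefficients with the determinantal formulae. Your route is genuinely different and, as far as I can see, correct: the moment identity $\inner{p(J)\delta_{1}}{q(J)\delta_{1}}=\int_{\reals}p\,q\,d\mu$ (justified by pushing powers of $J$ across the inner product, which is legitimate since \eqref{eq:polynomials-show-simplicity} puts every $\delta_{k}$ in $\dom J^{l}$ for all $l$) shows that the first-kind polynomials $\pi_{k}$ are orthonormal in $L_{2}(\reals,\mu)$; the degree count and positivity of the leading coefficients, together with the uniqueness of orthonormal polynomials, force $\pi_{k}=P_{k-1}$; and \eqref{eq:difference-equation-polynomials} is then \eqref{eq:difference-equation} reindexed. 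Your flag about the indeterminate case is the right one: the argument only uses $s_{l}=\int t^{l}d\mu$, which holds for every canonical selfadjoint extension. Two small observations. First, the appeal to Proposition~\ref{prop:density-polynomials} for completeness is superfluous --- orthonormality, degrees and leading-coefficient signs already pin down $\pi_{k}=P_{k-1}$ by the uniqueness statement you cite. Second, your proof establishes the content of Remark~\ref{rem:coincides-with-first-kind} independently and then deduces the proposition from it, reversing the paper's order of deduction; this introduces no circularity, and it buys you the identification of $q_{k},b_{k}$ with the output of the determinantal formulae for free from the bijection between Jacobi matrices and moment sequences recalled earlier in Section~\ref{sec:matr-repr-clos}, which the textbook proof must verify separately.
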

\begin{remark}
  \label{rem:coincides-with-first-kind}
  The coefficients of the three-term recurrence relation
  \eqref{eq:difference-equation-polynomials} form the matrix
  \eqref{eq:jacobi-matrix}. By comparing
  \eqref{eq:difference-equation} with
  \eqref{eq:difference-equation-polynomials}, one concludes that the
  polynomials of the first kind generated by \eqref{eq:jacobi-matrix}
  coincide with the polynomials obtained by orthonormalization of
  $\{t^{k-1}\}_{k=1}^{\infty}$ in $L_{2}(\reals,\mu)$.
\end{remark}
\begin{definition}
  \label{def:map-U}
  Let $\{P_{k-1}\}_{k=1}^{\infty}$ be the orthonormal sequence of
  polynomials in $L_{2}(\reals,\mu)$ given above. Define the map
  $U:L_{2}(\reals,\mu)\to l_{2}(\nats)$ such that
\begin{equation*}
  UP_{k-1}=\delta_{k}\,,\qquad k\in\nats\,,
\end{equation*}
where $\{\delta_{k}\}_{k=1}^{\infty}$ is the canonical basis in
$l_{2}(\nats)$.
\end{definition}
By definition $U$ realizes an isometric isomorphism between
$L_{2}(\reals,\mu)$ and $l_{2}(\nats)$.
\begin{definition}
  \label{def:operator-multiplication}
  Let $\mu$ be an arbitrary Borel measure. Denote by $M_{\mu}$ the
  operator of multiplication by the independent variable in
  $L_{2}(\reals,\mu)$ defined in its maximal domain, \ie\,,
  \begin{equation*}
    \dom(M_{\mu}):=\{f\in L_{2}(\reals,\mu):
    \int_{R}t^{2}\abs{f(t)}^{2}d\mu(t)<+\infty\}\,.
  \end{equation*}
\end{definition}
Note that $M_{\mu}$ is completely determined by the measure
$\mu$. It is important to bear in mind that in the definition of
$M_{\mu}$, the measure $\mu$ is not necessarily a solution to the
moment problem. This will be relevant in the next section.

Returning to the case in which $\mu$ is a solution to a Hamburger
moment problem, if this problem is determinate, then the operator $J$ in
$l_{2}(\nats)$ whose matrix representation is \eqref{eq:jacobi-matrix}
coincides with the selfadjoint operator implicit in
\eqref{eq:extremal-measure}. In this case, on the basis of
Proposition~\ref{prop:three-term-relation}, one concludes that
$U^{-1}JU=M_{\mu}$ with $\{P_{k-1}(t)\}_{k=1}^{\infty}$ being a basis
of representation of it and \eqref{eq:jacobi-matrix} the corresponding
matrix representation. Now, if the moment problem is indeterminate,
then $J$ (whose matrix representation is \eqref{eq:jacobi-matrix}) is
not selfadjoint and the operator implicit in
\eqref{eq:extremal-measure} is a selfadjoint extension of it. Lets
denote this selfadjoint extension by $\tilde{J}$. It turns out that
$U^{-1}\tilde{J}U=M_{\mu}$, but $\{P_{k-1}(t)\}_{k\in\nats}$ is no
longer a basis of representation of it since the minimality condition
(b) of Definition~\ref{def:basis-representation} is not satisfied.
\section{Selfadjoint simple operators}
\label{sec:self-simpl-oper}
Let $A$ be a selfadjoint
operator in a separable Hilbert space $\mathcal{H}$ and $E$ be its
spectral measure given by the spectral theorem. For any real Borel set
$\partial$ and $h\in\mathcal{H}$, denote by
\begin{equation}
  \label{eq:def-mu-f}
  \mu_{h}(\partial):=\inner{h}{E(\partial)h}
\end{equation}
the corresponding nonnegative measure.
Thus, the spectral theorem allows one to define the operator
\begin{equation*}
  \phi(A):=\int_{\reals}\phi dE\,,
  \qquad\dom \phi(A):=\{h\in\mathcal{H}:\phi\in L_{2}(\reals,\mu_{h})\}\,.
\end{equation*}
\begin{definition}
  \label{def:generating-element}
  An element $g\in\mathcal{H}$ is called a generating element of the
  selfadjoint operator $A$ if the
  span over all Borel sets $\partial\subset\reals$ of $E(\partial)g$
  is dense in $\mathcal{H}$. The operator $A$ is said to be simple
  when it has a generating element.
\end{definition}
For any simple operator $A$ and any of its generating elements $g$,
there is a unitary map $\Psi_{g}$ from $L_{2}(\reals,\mu_{g})$ onto
$\mathcal{H}$ given by
\begin{equation}
  \label{eq:psy-def}
 \phi\overset{\Psi_{g}}\mapsto\phi(A)g
\end{equation}
such that the operator of multiplication $M_{\mu_{g}}$ (see
Definition~\ref{def:operator-multiplication}) is transformed into the
operator $A$. The unitary map $\Psi^{*}_{g}$ realizes the canonical
representation of the simple operator $A$ with respect to $g$.

For any Borel measure $\mu$, the operator of multiplication $M_{\mu}$
is a selfadjoint simple operator. Any function
$\eta\in L_{2}(\reals,\mu)$ such that $\eta(t)\ne 0$ for $\mu-$a.\,e.
$t$ is a generating element of $M_{\mu}$.

\begin{definition}
  \label{def:cyclic-vector}
  A vector $f$ is a cyclic vector of  $A$  when
  $f\in\dom A^{k}$ for all $k\in\nats$ and
  \begin{equation*}
    \clos\Span_{k\in\nats\cup\{0\}}A^{k}f=\mathcal{H}\,.
  \end{equation*}
\end{definition}

A cyclic vector is a generating element \cite[Sec.\,69
Thm.\,1]{MR1255973}, but the converse is not necessarily
true. However, one can always construct a cyclic vector from a
generating element. This is done below.

By
Definition~\ref{def:cyclic-vector}, a function $\eta$ is a cyclic
vector of the operator of $M_{\mu}$ if only if
\begin{equation}
    \label{eq:dense-span-cyclic}
    \clos\Span_{k\in\nats\cup\{0\}}t^{k}\eta(t)=L_{2}(\reals,\mu)\,.
  \end{equation}
  Therefore, a straightforward consequence of the canonical
  representation of simple operators is the following lemma.
\begin{lemma}
  \label{lem:creation-of-cyclic}
  Assume $\mu=\mu_{g}$, with $g$ being a generating element of a
  simple operator
  $A$. For the vector $\eta(A)g$ to be a cyclic vector of $A$ it is
  necessary and sufficient that $\eta$ satisfies
  \eqref{eq:dense-span-cyclic}.
\end{lemma}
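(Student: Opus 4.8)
The plan is to push both requirements in the definition of a cyclic vector (Definition~\ref{def:cyclic-vector}) from $A$ across to $M_{\mu}$ by means of the unitary map $\Psi_{g}$ of \eqref{eq:psy-def}, recalling that $\Psi_{g}\colon L_{2}(\reals,\mu)\to\mathcal{H}$ is unitary, that it carries $M_{\mu}$ onto $A$, \ie\ $\Psi_{g}^{*}A\Psi_{g}=M_{\mu}$, and that $\Psi_{g}\eta=\eta(A)g$. Everything then reduces to checking that $\Psi_{g}$ intertwines the powers $M_{\mu}^{k}$ and $A^{k}$ together with their domains, and that a unitary map sends closed spans onto closed spans.

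The first step is to transport the powers: since $\Psi_{g}$ is unitary, $\Psi_{g}^{*}A^{k}\Psi_{g}=\bigl(\Psi_{g}^{*}A\Psi_{g}\bigr)^{k}=M_{\mu}^{k}$ for every $k\in\nats$, hence $A^{k}\Psi_{g}=\Psi_{g}M_{\mu}^{k}$ and $\Psi_{g}^{-1}(\dom A^{k})=\dom M_{\mu}^{k}$. Here $\dom M_{\mu}^{k}=\{f\in L_{2}(\reals,\mu):t^{k}f\in L_{2}(\reals,\mu)\}$, with $M_{\mu}^{k}f=t^{k}f$ on that set (one uses $\abs{t}^{j}\le 1+\abs{t}^{k}$ for $0\le j\le k$ to pass from $t^{k}f\in L_{2}(\reals,\mu)$ to $t^{j}f\in L_{2}(\reals,\mu)$ for the intermediate exponents). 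Applying $A^{k}\Psi_{g}=\Psi_{g}M_{\mu}^{k}$ to $\eta$ and using $\Psi_{g}\eta=\eta(A)g$, one obtains that $\eta(A)g\in\dom A^{k}$ for all $k\in\nats$ if and only if $\eta\in\dom M_{\mu}^{k}$ for all $k\in\nats$, and that in this case $A^{k}\bigl(\eta(A)g\bigr)=\Psi_{g}\bigl(t^{k}\eta\bigr)$.

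Finally, because a unitary operator is a linear homeomorphism onto its range, it sends the closed linear span of a family to the closed linear span of the image family; thus $\clos\Span_{k\in\nats\cup\{0\}}A^{k}\bigl(\eta(A)g\bigr)=\Psi_{g}\bigl(\clos\Span_{k\in\nats\cup\{0\}}t^{k}\eta\bigr)$, and, $\Psi_{g}$ being onto, the left-hand side equals $\mathcal{H}$ exactly when $\clos\Span_{k\in\nats\cup\{0\}}t^{k}\eta=L_{2}(\reals,\mu)$. Combining the two steps, $\eta(A)g$ meets both conditions of Definition~\ref{def:cyclic-vector} for $A$ precisely when $\eta$ is a cyclic vector of $M_{\mu}$, \ie\ precisely when \eqref{eq:dense-span-cyclic} holds, which is the assertion. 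I do not expect any genuine obstacle: as announced, the statement is a direct consequence of the canonical representation of simple operators; the only points that call for a moment of care are the bookkeeping of the domains of the unbounded powers $A^{k}$ under the unitary equivalence and the identification of $\dom M_{\mu}^{k}$ with $\{f:t^{k}f\in L_{2}(\reals,\mu)\}$.
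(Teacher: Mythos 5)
Your argument is correct and follows exactly the route the paper intends: the lemma is presented there as a direct consequence of the canonical representation, i.e.\ of transporting the two conditions of Definition~\ref{def:cyclic-vector} through the unitary map $\Psi_{g}$ of \eqref{eq:psy-def}, which is precisely what you carry out (with the welcome extra care about $\dom M_{\mu}^{k}$ and the closed spans). No discrepancy with the paper's reasoning.
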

The next statements are used to establish results pertaining to the
existence of cyclic vectors. They are
based on a reasoning used to prove
\cite[Thm.\,4.2.3]{MR0184042}. Although they are known, we present
the proofs below for the sake of completeness.
\begin{lemma}
  \label{lem:auxiliary-for-stone}
Let $\mu$ be a  $\sigma$-finite Borel measure on $\reals$ and $f$
a function in $L_{2}(\reals,\mu)$. Define
  \begin{equation}
    \label{eq:auxiliary-lemma-g}
    \mathcal{G}(t):=\int_{-\infty}^{t}f(s)d\mu(s) + C\,,
    \qquad C\in\complex\,.
  \end{equation}
  There exist a constant $C_{0}\in\complex$ such that, under the
  assumption that $C=C_{0}$ in \eqref{eq:auxiliary-lemma-g}, if
  \begin{equation}
    \label{eq:conditions-g-lemma}
    \int_{\reals}t^{k}e^{-\frac12t^{2}}d\mathcal{G}=0
  \end{equation}
  for all $k\in\nats\cup\{0\}$, then $\mathcal{G}(t)=0$ for a.\,e. t
  in $\reals$.
\end{lemma}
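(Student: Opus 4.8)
The plan is to move the conditions \eqref{eq:conditions-g-lemma} off the Lebesgue--Stieltjes measure $d\mathcal{G}=f\,d\mu$ and onto the function $\mathcal{G}$ itself by integration by parts, to observe that the resulting Gaussian-weighted orthogonality relations involve polynomials that, after the right choice of $C_{0}$, span \emph{all} polynomials, and finally to invoke completeness of the Hermite system in $L_{2}(\reals,dt)$ to force $\mathcal{G}\equiv 0$ a.e. The only choice of $C_{0}$ that can work is dictated by the fact that the polynomials arising from integration by parts miss exactly the constants, and this is the conceptual heart of the argument; the technical heart, which I expect to be the main obstacle, is the integrability bookkeeping (vanishing of boundary terms and suitable $L_{2}/L_{1}$ membership) in the stated $\sigma$-finite generality.

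First I would record the elementary growth estimate on $\mathcal{G}$. Since $\mathcal{G}(t)=C+\int_{-\infty}^{t}f\,d\mu$ and $f\in L_{2}(\reals,\mu)$, the Cauchy--Schwarz inequality gives $\abs{\mathcal{G}(t)}\le\abs{C}+\norm{f}_{L_{2}(\mu)}\,\mu((-\infty,t])^{1/2}$, while $\mathcal{G}(t)\to C$ as $t\to-\infty$. Together with the super-exponential decay of $t^{k}e^{-\frac12 t^{2}}$ (and the integrability implicit in the meaningfulness of \eqref{eq:conditions-g-lemma}), this justifies, for every $k\in\nats\cup\{0\}$,
\begin{equation*}
  \int_{\reals}t^{k}e^{-\frac12 t^{2}}\,d\mathcal{G}(t)
  =-\int_{\reals}\mathcal{G}(t)\,\frac{d}{dt}\!\left(t^{k}e^{-\frac12 t^{2}}\right)dt
  =\int_{\reals}\mathcal{G}(t)\,q_{k}(t)\,e^{-\frac12 t^{2}}\,dt ,
\end{equation*}
where $q_{k}(t):=t^{k+1}-k\,t^{k-1}$ (so $q_{0}(t)=t$) and the boundary terms vanish by the growth bound. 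Hence \eqref{eq:conditions-g-lemma} is equivalent to $\int_{\reals}\mathcal{G}(t)q_{k}(t)e^{-\frac12 t^{2}}\,dt=0$ for all $k\ge 0$. The same estimate, squared and integrated against the Gaussian, shows $\mathcal{G}(t)e^{-\frac14 t^{2}}\in L_{2}(\reals,dt)$; this is the point one must watch carefully when $\mu$ is merely $\sigma$-finite.

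Next I would analyze $\Span\{q_{k}\}_{k\ge0}$. Since $q_{k}e^{-\frac12 t^{2}}=-\tfrac{d}{dt}\!\left(t^{k}e^{-\frac12 t^{2}}\right)$, each $q_{k}$ satisfies $\int_{\reals}q_{k}\,e^{-\frac12 t^{2}}\,dt=0$; and because $q_{k}$ has degree $k+1$, the polynomials $q_{0},\dots,q_{n-1}$ are linearly independent and therefore span the full $n$-dimensional hyperplane $\{p:\deg p\le n,\ \int_{\reals}p\,e^{-\frac12 t^{2}}\,dt=0\}$. Letting $n\to\infty$,
\begin{equation*}
  \Span\{q_{k}:k\ge0\}=\Big\{\,p\in\complex[t]:\int_{\reals}p(t)e^{-\frac12 t^{2}}\,dt=0\,\Big\},
\end{equation*}
a subspace of codimension one, complemented by $\complex\cdot 1$. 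Now I would set
\begin{equation*}
  C_{0}:=-\frac{1}{\sqrt{2\pi}}\int_{\reals}\Big(\int_{-\infty}^{t}f\,d\mu\Big)e^{-\frac12 t^{2}}\,dt ,
\end{equation*}
which depends only on $f$ and $\mu$, so that with $C=C_{0}$ one has $\int_{\reals}\mathcal{G}(t)e^{-\frac12 t^{2}}\,dt=0$; note that replacing $C$ by $C_{0}$ does not disturb the relations $\int_{\reals}\mathcal{G}q_{k}e^{-\frac12 t^{2}}\,dt=0$, precisely because each $q_{k}$ already integrates to zero against the Gaussian. Combining the two, $\int_{\reals}\mathcal{G}(t)p(t)e^{-\frac12 t^{2}}\,dt=0$ for every $p$ in $\Span\{q_{k}\}+\complex\cdot 1=\complex[t]$; in particular $\int_{\reals}\mathcal{G}(t)t^{j}e^{-\frac12 t^{2}}\,dt=0$ for all $j\in\nats\cup\{0\}$.

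Finally I would conclude by an analytic completeness argument. The function $F(z):=\int_{\reals}\mathcal{G}(t)e^{-\frac12 t^{2}}e^{izt}\,dt$ is entire, because $\int_{\reals}\abs{\mathcal{G}(t)}e^{-\frac12 t^{2}+\abs z\abs t}\,dt\le\norm{\mathcal{G}e^{-\frac14 t^{2}}}_{L_{2}(\reals,dt)}\,\big(\int_{\reals}e^{-\frac12 t^{2}+2\abs z\abs t}\,dt\big)^{1/2}<\infty$, and $F^{(j)}(0)=i^{j}\int_{\reals}\mathcal{G}(t)t^{j}e^{-\frac12 t^{2}}\,dt=0$ for all $j$, so $F\equiv 0$. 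Thus the Fourier transform of the $L_{1}$-function $\mathcal{G}(t)e^{-\frac12 t^{2}}$ vanishes identically, forcing $\mathcal{G}(t)e^{-\frac12 t^{2}}=0$, hence $\mathcal{G}(t)=0$, for a.e.\ $t\in\reals$. (Equivalently, one may appeal directly to the completeness of the Hermite functions $\{t^{j}e^{-\frac14 t^{2}}\}_{j\ge0}$ in $L_{2}(\reals,dt)$ applied to $\mathcal{G}(t)e^{-\frac14 t^{2}}$.) This is the assertion of the lemma.
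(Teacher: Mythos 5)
Your proof is correct and follows essentially the same route as the paper's: choose $C_{0}$ so that $\int_{\reals}\mathcal{G}(t)e^{-\frac12 t^{2}}\,dt=0$, integrate by parts to obtain orthogonality of $\mathcal{G}$ to all polynomials against the Gaussian weight, and conclude by completeness of the Hermite system (your span-of-$\{q_{k}\}$ argument is a repackaging of the paper's recurrence, and your entire-function argument is the standard proof of the Chebyshev--Hermite completeness that the paper simply cites). Incidentally, your normalization of $C_{0}$ with the weight $e^{-\frac12 t^{2}}$ is the one that actually makes $\int_{\reals}\mathcal{G}(t)e^{-\frac12 t^{2}}\,dt$ vanish; the paper's definition uses $e^{-t^{2}}$ there, which appears to be a typo, and your remark that the integrability bookkeeping is delicate for merely $\sigma$-finite $\mu$ points at a looseness already present in the paper's own formulation rather than a defect of your argument.
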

\begin{proof}
  If one defines
  \begin{equation*}
    C_{0}:=\frac{-1}{\sqrt{2\pi}}
\int_{\reals}\left(\int_{-\infty}^{t}f(s)
        d\mu(s)\right)e^{-t^{2}}dt\,,
  \end{equation*}
  then
  \begin{equation}
    \label{eq:first-zero}
    \int_{\reals}\mathcal{G}(t)e^{-\frac12t^{2}}dt=0\,.
  \end{equation}
  Integrating \eqref{eq:conditions-g-lemma} by parts, one arrives at
  \begin{equation}
    \label{eq:integration-by-parts}
    \int_{\reals}\left(kt^{k-1}-t^{k+1}\right)
    \mathcal{G}(t)e^{-\frac12t^{2}}dt=0\,.
  \end{equation}
  Substituting $k=0$ in this equation, one obtains
  \begin{equation}
    \label{eq:k-is-zero}
    \int_{\reals}\mathcal{G}(t)te^{-\frac12t^{2}}dt=0\,.
  \end{equation}
  Using \eqref{eq:first-zero} and \eqref{eq:k-is-zero}, it follows
  from \eqref{eq:integration-by-parts} by recurrence that
   \begin{equation}
    \label{eq:lemma-result}
    \int_{\reals}\mathcal{G}(t)t^{k}e^{-\frac12t^{2}}dt=0\,,\quad
    \forall\, k\in\nats\cup\{0\}\,.
  \end{equation}
  By the closure of the Chebyshev-Hermite functions in $L_{2}(\reals)$
  (see \cite[Thm.\,5.7.1]{MR0372517} and
  \cite[Sec.\,11.C]{MR1255973}), one concludes from
  \eqref{eq:lemma-result} that
  $\mathcal{G}(t)=0$ for a.\,e. $t\in\reals$.
\end{proof}
\begin{lemma}
  \label{lem:cyclic-every-measure}
  Let $\mu$ be an arbitrary finite Borel measure. If
  $\eta(t)=\exp(-\alpha t^{2})$ with $\alpha\ge 1/2$, then $\eta$ satisfies
  \eqref{eq:dense-span-cyclic}.
\end{lemma}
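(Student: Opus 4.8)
The plan is to verify \eqref{eq:dense-span-cyclic} for $\eta(t)=e^{-\alpha t^{2}}$ by the standard Hilbert-space reformulation: since $\clos\Span_{k\in\nats\cup\{0\}}t^{k}\eta(t)=L_{2}(\reals,\mu)$ holds if and only if the orthogonal complement of $\{t^{k}\eta\}_{k\ge 0}$ is trivial, I would take an arbitrary $F\in L_{2}(\reals,\mu)$ with $\inner{F}{t^{k}\eta}=0$ for all $k\in\nats\cup\{0\}$ and show $F=0$. (Each $t^{k}\eta$ indeed lies in $L_{2}(\reals,\mu)$, being bounded and $\mu$ finite.) Writing out the orthogonality condition, $\int_{\reals}\cc{F(t)}\,t^{k}e^{-\alpha t^{2}}\,d\mu(t)=0$ for every $k$.

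The key manoeuvre is to absorb the excess Gaussian weight. Set $f(t):=\cc{F(t)}\,e^{-(\alpha-\frac12)t^{2}}$. Since $\alpha\ge\frac12$ we have $\abs{f}\le\abs{F}$, so $f\in L_{2}(\reals,\mu)$; and as $\mu$ is finite, Cauchy--Schwarz gives $f\in L_{1}(\reals,\mu)$ too. Moreover
\[
\int_{\reals}t^{k}e^{-\frac12t^{2}}f(t)\,d\mu(t)=\int_{\reals}\cc{F(t)}\,t^{k}e^{-\alpha t^{2}}\,d\mu(t)=0,\qquad k\in\nats\cup\{0\}.
\]
Now apply Lemma~\ref{lem:auxiliary-for-stone} to this $f$: with the constant $C_{0}$ provided there and $\mathcal{G}(t)=\int_{-\infty}^{t}f(s)\,d\mu(s)+C_{0}$, the Lebesgue--Stieltjes measure $d\mathcal{G}$ equals $f\,d\mu$, so the displayed identities are exactly \eqref{eq:conditions-g-lemma}; hence $\mathcal{G}(t)=0$ for a.\,e.\ $t\in\reals$.

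The remaining step, and the only delicate one, is to promote "$\mathcal{G}=0$ a.\,e." to "$f=0$ $\mu$-a.\,e." — delicate precisely because $\mu$ may be singular with respect to Lebesgue measure, so a Lebesgue-null exceptional set could still carry $\mu$-mass. Here one uses the regularity of the Stieltjes primitive: because $f\in L_{1}(\reals,\mu)$ and $\mu$ is finite, dominated convergence shows $\mathcal{G}$ is one-sidedly continuous; and a one-sidedly continuous function that vanishes off a Lebesgue-null set vanishes identically, since the complement of a null set is dense (for each $t_{0}$ pick $s_{n}\downarrow t_{0}$ off the null set and pass to the limit). Thus $\mathcal{G}\equiv 0$, so $f\,d\mu$, being the Lebesgue--Stieltjes measure of $\mathcal{G}$, is the zero measure; equivalently $\int_{(a,b]}f\,d\mu=\mathcal{G}(b)-\mathcal{G}(a)=0$ for all $a<b$, whence a routine Dynkin-system argument gives $f=0$ $\mu$-a.\,e. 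Since $e^{-(\alpha-\frac12)t^{2}}>0$ everywhere, this forces $F=0$ in $L_{2}(\reals,\mu)$, proving \eqref{eq:dense-span-cyclic}. I expect the Lebesgue-a.e.\ versus $\mu$-a.e.\ transfer to be the crux; everything else is bookkeeping around Lemma~\ref{lem:auxiliary-for-stone}.
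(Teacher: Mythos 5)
Your proof is correct and follows essentially the same route as the paper's: reduce \eqref{eq:dense-span-cyclic} to showing the orthogonal complement is trivial, rewrite the orthogonality relations as $\int_{\reals}t^{k}e^{-\frac12 t^{2}}d\mathcal{G}=0$ with the Gaussian excess absorbed into the primitive $\mathcal{G}$, and invoke Lemma~\ref{lem:auxiliary-for-stone}. The only difference is that you spell out the passage from ``$\mathcal{G}=0$ Lebesgue-a.e.'' to ``$d\mathcal{G}$ is the zero measure, hence $\phi=0$ $\mu$-a.e.'' via one-sided continuity, a step the paper compresses into its final display; this is a welcome clarification but not a different argument.
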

\begin{proof}
  Since $\mu$ is finite, $t^{k}\eta(t)$ is in $L_{2}(\reals,\mu)$ for
  any $k\in\nats\cup\{0\}$. Suppose that $\phi$ in $L_{2}(\reals,\mu)$
  is orthogonal to all functions $t^{k}\eta(t)$, \ie, for all
  $k\in\nats\cup\{0\}$,
  \begin{align*}
    0&=\int_{\reals}\cc{\phi(t)}t^{k}e^{-\alpha t^{2}}d\mu(t)\\
    &=\int_{\reals}t^{k}e^{-\frac12t^{2}}d\mathcal{G}(t)\,,
  \end{align*}
  where
  \begin{equation*}
    \mathcal{G}(t)=\int_{-\infty}^{t}\cc{\phi(s)}
    e^{-(\alpha-\frac12)s^{2}}d\mu(s) + C
  \end{equation*}
  with $C$ being an arbitrary constant. By
  Lemma~\ref{lem:auxiliary-for-stone}, one obtains that
  $\mathcal{G}(t)=0$ for a.\,e. $t\in\reals$.  Thus,
  \begin{equation*}
  \norm{\phi}^{2}=
  \int_{\reals}e^{(\alpha-\frac12)t^{2}}\phi(t)d\mathcal{G}(t)=0\,.
\end{equation*}
\end{proof}

The conclusion of Lemma~\ref{lem:cyclic-every-measure} motivates the
following definition:
\begin{definition}
  \label{def:stone-vectors}
  Let $g$ be a generating element for the selfadjoint operator
  $A$. For any $\alpha\ge 0$, define
  \begin{equation*}
    \eta(\alpha,g):=\exp(-\alpha A^{2})g\,.
  \end{equation*}
  We refer to $\eta(\alpha,g)$ as to the Stone vector of order
  $\alpha$ obtained from the generating element $g$.
\end{definition}
The combination of Lemmas~\ref{lem:creation-of-cyclic} and
\ref{lem:cyclic-every-measure} yields the following assertion which is
the first part of a slight generalization of Stone classical result
(see \cite[Thm.\, 4.2.3]{MR0184042}).
\begin{corollary}
  \label{cor:stonecyclic}
  For any generating element $g$ of a simple selfadjoint operator
  $A$, any Stone vector $\eta(\alpha,g)$ is a cyclic vector of $A$
  for all $\alpha\ge 1/2$.
\end{corollary}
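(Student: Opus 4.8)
The plan is to reduce the statement to the two preceding lemmas by passing to the canonical representation of the simple operator $A$. First I would fix a generating element $g$ and move to the model space $L_{2}(\reals,\mu_{g})$ via the unitary map $\Psi_{g}$ of \eqref{eq:psy-def}, under which $A$ becomes the multiplication operator $M_{\mu_{g}}$. The measure $\mu_{g}$ is a \emph{finite} Borel measure, since $\mu_{g}(\reals)=\inner{g}{E(\reals)g}=\norm{g}^{2}<+\infty$. Under this identification the Stone vector $\eta(\alpha,g)=\exp(-\alpha A^{2})g$ is, by the functional calculus, exactly $\Psi_{g}(\eta)=\eta(A)g$ with $\eta(t)=\exp(-\alpha t^{2})\in L_{2}(\reals,\mu_{g})$.

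Next I would invoke Lemma~\ref{lem:cyclic-every-measure}: because $\mu_{g}$ is finite and $\alpha\ge 1/2$, the function $\eta(t)=\exp(-\alpha t^{2})$ satisfies the density condition \eqref{eq:dense-span-cyclic} with $\mu=\mu_{g}$. Then I would apply Lemma~\ref{lem:creation-of-cyclic} with this $\mu$ and this $\eta$: it asserts precisely that $\eta(A)g$ is a cyclic vector of $A$ if and only if $\eta$ satisfies \eqref{eq:dense-span-cyclic}. Combining the two yields that $\eta(\alpha,g)$ is a cyclic vector of $A$ for every $\alpha\ge 1/2$.

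There is no genuine obstacle here; the analytic content has already been discharged in Lemmas~\ref{lem:auxiliary-for-stone} and \ref{lem:cyclic-every-measure}, and this corollary is the bookkeeping step translating their conclusion back to the operator $A$. The only points requiring a word of care are that one is entitled to apply Lemma~\ref{lem:cyclic-every-measure} to $\mu_{g}$ — legitimate precisely because $g\in\mathcal{H}$ forces $\mu_{g}$ to be finite — and that $\eta(\alpha,g)$ indeed belongs to $\dom A^{k}$ for every $k\in\nats$, which is automatic since $t^{k}\exp(-\alpha t^{2})\in L_{2}(\reals,\mu_{g})$ for any finite (in fact any $\sigma$-finite) measure, so the cyclic-vector requirement of Definition~\ref{def:cyclic-vector} is met.
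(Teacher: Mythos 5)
Your proposal is correct and follows exactly the paper's route: the paper gives no separate proof but states the corollary as the direct combination of Lemmas~\ref{lem:creation-of-cyclic} and \ref{lem:cyclic-every-measure}, which is precisely the reduction you carry out (identifying $\eta(\alpha,g)$ with $\eta(A)g$ for $\eta(t)=\exp(-\alpha t^{2})$ and using finiteness of $\mu_{g}$). Your added remarks on why $\mu_{g}$ is finite and why $\eta(\alpha,g)\in\dom A^{k}$ are sound and merely make explicit what the paper leaves implicit.
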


\begin{remark}
  \label{rem:delta-1-cyclic}
  Let $J$ be the operator whose matrix representation is
  \eqref{eq:jacobi-matrix} with respect to the canonical basis
  $\{\delta_{k}\}_{k=1}^{\infty}$. If $J$ is selfadjoint, then it
  follows from \eqref{eq:polynomials-show-simplicity} that $J$ is
  simple and $\delta_{1}$ is a cyclic vector of it. If
  $J\varsubsetneq J^{*}$, then $\delta_{1}$ is a cyclic vector for
  each of the selfadjoint extensions of $J$ (and therefore each
  selfadjoint extension is simple).
\end{remark}
The next proposition amounts, in a certain sense, to the converse of
the assertion in the preceding remark.

\begin{proposition}
  \label{prop:gram-schmidt-basis}
  Let $\delta$ be a cyclic vector of $A$ and
  $\{\delta_{k}\}_{k=1}^{\infty}$ be the
  orthonormal basis obtained from applying the Gram-Schmidt procedure
  to the sequence $\{A^{k-1}\delta\}_{k=1}^{\infty}$. If
  $B$ is the minimal closed operator such that
  $B\delta_{k}=A\delta_{k}$ for all $k\in\nats$ (\cf\
  Definition~\ref{def:basis-representation}), then the matrix
  representation of
  $B$ with respect to $\{\delta_{k}\}_{k=1}^{\infty}$ is a
  semi-infinite Jacobi matrix (see \eqref{eq:jacobi-matrix}).
\end{proposition}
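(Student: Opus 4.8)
The plan is to show, first, that $\{\delta_k\}_{k=1}^\infty$ is a basis of representation for $A$, and second, that with respect to this basis the matrix of $A$ (hence of $B$) is tridiagonal with positive off-diagonal entries. Begin by recording the structure of the Gram–Schmidt output: since $\delta_k$ is obtained by orthonormalizing $\{A^{j-1}\delta\}_{j=1}^k$, we have $\delta_k \in \Span\{\delta,A\delta,\dots,A^{k-1}\delta\}$ and, conversely, $A^{k-1}\delta \in \Span\{\delta_1,\dots,\delta_k\}$. In particular each $\delta_k$ lies in $\dom A^m$ for every $m$ (a finite linear combination of vectors in $\dom A^m$), so condition (a) of Definition~\ref{def:basis-representation} holds; and $\{\delta_k\}$ is genuinely an orthonormal basis of $\mathcal H$ because $\delta$ is cyclic, so $\clos\Span_{k}A^{k-1}\delta = \mathcal H$ and the two spans just described have the same closure.

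Next I would establish the tridiagonal shape. The key observation is $A\delta_k \in \Span\{\delta_1,\dots,\delta_{k+1}\}$: indeed $\delta_k$ is a polynomial of degree $\le k-1$ in $A$ applied to $\delta$, so $A\delta_k$ is a polynomial of degree $\le k$ in $A$ applied to $\delta$, which lies in $\Span\{\delta_1,\dots,\delta_{k+1}\}$. Hence $a_{jk} = \inner{\delta_j}{A\delta_k} = 0$ whenever $j > k+1$; by Hermiticity of $A$ (it is selfadjoint, so $a_{jk} = \cc{a_{kj}}$ and these are real since everything can be arranged over $\reals$ once one checks the $\delta$-generated subspaces are real-invariant — alternatively just note $a_{jk}=0$ for $j>k+1$ forces $a_{jk}=0$ for $k>j+1$ by symmetry) we also get $a_{jk}=0$ for $k > j+1$. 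So $[A]$ is tridiagonal: $a_{kk}=:q_k$, $a_{k,k+1}=a_{k+1,k}=:b_k$. It remains to see $b_k > 0$. Write $\delta_{k+1} = c_{k+1}A^{k}\delta + (\text{lower order terms})$ with $c_{k+1}>0$ (the Gram–Schmidt normalization with positive leading coefficient, exactly as in Remark~\ref{rem:coincides-with-first-kind}); then $b_k = \inner{A\delta_k}{\delta_{k+1}}$, and since $A\delta_k$ has leading term (in the $A$-polynomial sense) $c_k A^{k}\delta$ with $c_k>0$, pairing against $\delta_{k+1}$ picks out a positive multiple of $\norm{\delta_{k+1}\text{'s leading part}}$; more cleanly, $b_k = c_k/c_{k+1} > 0$ by comparing leading coefficients in the three-term recurrence one derives. (If some $A^{k}\delta$ were linearly dependent on the previous ones the Gram–Schmidt would terminate, contradicting that $\delta$ is cyclic with $\dom A^m \ni \delta$ for all $m$ and $\mathcal H$ infinite-dimensional; so all $c_k>0$ are well-defined.)

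Finally, condition (b) of Definition~\ref{def:basis-representation}. Here I would invoke that $B$ is \emph{by definition} the minimal closed operator with $B\delta_k = A\delta_k$, so there is nothing to prove for $B$ itself — the proposition only asserts its matrix is Jacobi, which is the computation above since $\inner{\delta_j}{B\delta_k} = \inner{\delta_j}{A\delta_k} = a_{jk}$. So strictly the statement as phrased does not even require $\{\delta_k\}$ to be a basis of representation for $A$; it just requires the entries $\inner{\delta_j}{A\delta_k}$ to have Jacobi structure, together with the verification that the Jacobi matrix so obtained satisfies \eqref{eq:condition-on-matrix-for-operator} (automatic, since each column has at most three nonzero entries) so that the phrase ``matrix representation of $B$'' is legitimate via the converse part of Definition~\ref{def:basis-representation}'s surrounding discussion. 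The main obstacle, and the only place needing care, is the positivity $b_k>0$ and the non-degeneracy (no $b_k=0$): this is where one must use that $\delta$ is cyclic rather than merely generating, ensuring the vectors $A^{k-1}\delta$ are linearly independent so Gram–Schmidt never stalls and the leading coefficients $c_k$ are strictly positive.
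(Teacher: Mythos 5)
Your proof is correct, but it takes a different route from the paper's. The paper does not argue directly in $\mathcal{H}$: it first observes $\delta_{1}=\delta/\norm{\delta}$, then transfers the whole problem to $L_{2}(\reals,\mu_{\delta_{1}})$ via the canonical representation $\Psi_{\delta_{1}}$ of \eqref{eq:psy-def}, under which $A^{k-1}\delta_{1}\mapsto t^{k-1}$ and $\delta_{k}\mapsto P_{k-1}$, and then simply cites the classical three-term recurrence for orthonormal polynomials (Proposition~\ref{prop:three-term-relation}) to conclude that $a_{jk}=\inner{P_{j-1}}{tP_{k-1}}_{L_{2}(\reals,\mu_{\delta_{1}})}$ has Jacobi form. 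You instead rederive the three-term structure from scratch inside $\mathcal{H}$: the degree count $A\delta_{k}\in\Span\{\delta_{1},\dots,\delta_{k+1}\}$ gives $a_{jk}=0$ for $j>k+1$, symmetry of $A$ kills the other triangle, and the leading-coefficient comparison $b_{k}=c_{k}/c_{k+1}>0$ gives positivity of the off-diagonal (with cyclicity plus $\dim\mathcal{H}=\infty$ guaranteeing the $A^{k-1}\delta$ are linearly independent so Gram--Schmidt never stalls). Your version is more self-contained and makes explicit where positivity of the $b_{k}$ comes from, which the paper leaves buried in the cited polynomial result; the paper's version is shorter and, more importantly, sets up the identification $\delta_{k}=\Psi_{\delta_{1}}P_{k-1}$ that is reused verbatim in the proofs of Propositions~\ref{prop:basis-representation} and \ref{prop:necessary-and-sufficient}. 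Your closing remark that condition (b) of Definition~\ref{def:basis-representation} is vacuous for $B$ (since $B$ is defined as the minimal closed operator) also agrees with the paper's final sentence, so there is no gap.
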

\begin{proof}
  First note that, by the Gram-Schmidt algorithm (see \cite[Ch.\,2
Sec.\,2 Thm. 5]{MR1192782}), one has
$\delta_{1}=\delta/\norm{\delta}$, so $\delta_{1}$ is a normalized
cyclic vector. Taking this into account, the proof reduces to a well known assertion
  \cite[Sec.\,3.1.3]{MR2085852} on orthogonal polynomials by means of the
  canonical representation of $A$ with respect to $\delta_{1}$. Indeed,
  using the map introduced in \eqref{eq:psy-def}, one has
  \begin{equation*}
   \Psi^{*}_{\delta_{1}}(A^{k-1}\delta_{1})=t^{k-1}
 \end{equation*}
 for any $k\in\nats$. The unitarity of the map given in
 \eqref{eq:psy-def} and Definition~\ref{def:cyclic-vector} imply that
 the sequence $\{t^{k-1}\}_{k=1}^{\infty}$ is total in
 $L_2(\reals,\mu_{\delta_{1}})$. Therefore, the Gram-Schmidt procedure
 applied to $\{t^{k-1}\}_{k=1}^{\infty}$ yields an orthonormal basis
 $\{P_{k-1}(t)\}_{k\in\nats}$ in $L_2(\reals,\mu_{\delta_{1}})$ (see the
 end of Section~\ref{sec:matr-repr-clos}). Clearly,
 $\delta_{k}=\Psi_{\delta_{1}}P_{k-1}$. Therefore, on the basis of
 Proposition~\ref{prop:three-term-relation}, one concludes that the
 numbers
 \begin{equation}
  \label{eq:obtained-matrix}
  a_{jk}:=\inner{\delta_{j}}{A\delta_{k}}=
  \inner{P_{j-1}}{tP_{k-1}}_{L_{2}(\reals,\mu_{\delta_{1}})}\,,\quad
j,k\in\nats\,,
\end{equation}
generate a semi-infinite Jacobi matrix. To
finish the proof notice that, by
Definition~\ref{def:basis-representation}, $B$ is the operator
whose matrix representation has the entries
\eqref{eq:obtained-matrix}.
\end{proof}
\begin{remark}
  \label{rem:example}
  In the assertion of Proposition~\ref{prop:gram-schmidt-basis}, it
  could be that $B\varsubsetneq A$, \ie\ the orthonormal basis
  obtained from the Gram-Schmidt procedure applied to
  $\{A^{k-1}\delta\}_{k=1}^{\infty}$ is not necessarily a basis of
  representation for $A$. An example of this has already appeared at
  the end of Section~\ref{sec:matr-repr-clos}. Indeed, let $J$ be the
  operator whose matrix representation is \eqref{eq:jacobi-matrix}
  with respect to the canonical basis $\{\delta_{k}\}_{k=1}^{\infty}$
  and assume that $J\varsubsetneq J^{*}$. By
  Remark~\ref{rem:delta-1-cyclic}, $\delta_{1}$ is a cyclic vector of
  $\tilde{J}$, a fixed selfadjoint extension of $J$. Moreover, it
  follows from \eqref{eq:polynomials-show-simplicity} and
  Remark~\ref{rem:coincides-with-first-kind} that the basis
  $\{\delta_{k}\}_{k=1}^{\infty}$ is obtained from the Gram-Schmidt
  procedure applied to $\{\tilde{J}^{k}\delta_{1}\}$ since
  $\tilde{J}\supset J$. Note that $\{\delta_{k}\}_{k=1}^{\infty}$ is
  the basis of representation for $J$, but not for $\tilde{J}$.
\end{remark}
The next assertion is a slight generalization of a classical result by
Stone on simple operators (see \cite[Thm.\,4.2.3]{MR0184042}).
\begin{proposition}
  \label{prop:basis-representation}
  For any simple selfadjoint operator, there is an uncountable set of
  bases of matrix representation such that the corresponding matrix
  representation of the operator with respect to each of the bases is
  a Jacobi matrix.
\end{proposition}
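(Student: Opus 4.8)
The plan is to construct the required bases from the Stone vectors $\eta(\alpha,g)=\exp(-\alpha A^{2})g$, where $g$ is a fixed generating element of the simple selfadjoint operator $A$ (which exists by Definition~\ref{def:generating-element}) and $\alpha$ ranges over the uncountable set $[1/2,\infty)$. By Corollary~\ref{cor:stonecyclic}, each $\eta(\alpha,g)$ is a cyclic vector of $A$, so Proposition~\ref{prop:gram-schmidt-basis} applies: orthonormalizing $\{A^{k-1}\eta(\alpha,g)\}_{k=1}^{\infty}$ by the Gram--Schmidt procedure produces an orthonormal basis $\{\delta_{k}^{(\alpha)}\}_{k=1}^{\infty}$ for which the minimal closed operator $B_{\alpha}$ satisfying $B_{\alpha}\delta_{k}^{(\alpha)}=A\delta_{k}^{(\alpha)}$ has a semi-infinite Jacobi matrix (see~\eqref{eq:jacobi-matrix}) as its matrix representation. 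Two things then remain: to verify that each $\{\delta_{k}^{(\alpha)}\}_{k=1}^{\infty}$ is genuinely a basis of representation for $A$, \ie\ that $B_{\alpha}=A$ and not merely $B_{\alpha}\subset A$; and to check that distinct values of $\alpha$ yield distinct bases.

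For the first point I would pass to the canonical representation of $A$ with respect to the normalized cyclic vector $\delta_{1}^{(\alpha)}=\eta(\alpha,g)/\norm{\eta(\alpha,g)}$. Under the unitary map $\Psi_{\delta_{1}^{(\alpha)}}$ of~\eqref{eq:psy-def}, $A$ becomes $M_{\mu}$ with $\mu=\mu_{\delta_{1}^{(\alpha)}}$, the powers $A^{k-1}\delta_{1}^{(\alpha)}$ become the monomials $t^{k-1}$ (which are total in $L_{2}(\reals,\mu)$ by cyclicity), the $\delta_{k}^{(\alpha)}$ become the orthonormal polynomials $P_{k-1}$ of Proposition~\ref{prop:three-term-relation}, and $B_{\alpha}$ becomes the closure of the restriction of $M_{\mu}$ to the linear span of the polynomials. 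The crucial observation is that $\mu$ is, up to a positive factor, the measure $\partial\mapsto\int_{\partial}e^{-2\alpha t^{2}}d\mu_{g}(t)$; since $\mu_{g}(\reals)=\norm{g}^{2}<\infty$, the bound $\int_{\reals}t^{2n}e^{-2\alpha t^{2}}d\mu_{g}\le\norm{g}^{2}(n/(2\alpha e))^{n}$ shows that $\mu$ has finite moments of all orders and that its moment sequence obeys the Carleman condition, so the associated Hamburger moment problem is determinate. As the polynomials are dense in $L_{2}(\reals,\mu)$, the measure $\mu$ coincides with the measure~\eqref{eq:extremal-measure} of this moment problem (Proposition~\ref{prop:density-polynomials}); hence, by the discussion at the end of Section~\ref{sec:matr-repr-clos}, the Jacobi operator attached to $\mu$ is selfadjoint and equals $M_{\mu}$, that is, the closure of the restriction of $M_{\mu}$ to the polynomials equals $M_{\mu}$. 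Transporting this equality back through $\Psi_{\delta_{1}^{(\alpha)}}$ gives $B_{\alpha}=A$, so $\{\delta_{k}^{(\alpha)}\}_{k=1}^{\infty}$ is a basis of representation for $A$ whose matrix representation is a Jacobi matrix.

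For the distinctness, observe that in the model $L_{2}(\reals,\mu_{g})$ the vector $\eta(\alpha,g)$ is the function $t\mapsto e^{-\alpha t^{2}}$, and for $\alpha\ne\beta$ the functions $e^{-\alpha t^{2}}$ and $e^{-\beta t^{2}}$ are not proportional on $\supp\mu_{g}$, which is an infinite set because $A$ is simple selfadjoint on an infinite-dimensional space. Therefore $\delta_{1}^{(\alpha)}$ and $\delta_{1}^{(\beta)}$ differ for $\alpha\ne\beta$, the map $\alpha\mapsto\{\delta_{k}^{(\alpha)}\}_{k=1}^{\infty}$ is injective on $[1/2,\infty)$, and the set of bases so obtained is uncountable, which proves the proposition.

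I expect the main obstacle to be the step $B_{\alpha}=A$: cyclicity of $\eta(\alpha,g)$ by itself is not enough, since Remark~\ref{rem:example} already exhibits a cyclic vector of a selfadjoint operator whose Gram--Schmidt basis fails to be a basis of representation. What rescues the argument is the Gaussian weight $e^{-2\alpha t^{2}}$, which forces $\mu_{\delta_{1}^{(\alpha)}}$ to have a determinate moment problem; carefully establishing this determinacy (via Carleman's test, or equivalently by checking that the constant function is an analytic vector for $M_{\mu}$) and then using it, together with the extremality of $\mu$, to identify the polynomial restriction of $M_{\mu}$ with $M_{\mu}$ itself is the heart of the proof.
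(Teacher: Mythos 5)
Your proof is correct, and while it starts exactly as the paper does (Stone vectors $\eta(\alpha,g)$ for $\alpha\ge\tfrac12$, cyclicity via Corollary~\ref{cor:stonecyclic}, Jacobi form via Proposition~\ref{prop:gram-schmidt-basis}), it handles the crucial step $B_{\alpha}=A$ by a genuinely different route. The paper never mentions determinacy: it shows directly that $\ran(B-\I I)$ is dense by testing an arbitrary $h=\phi(A)g$ against $(B-\I I)\delta_{k}(\alpha,g)$, passing to $L_{2}(\reals,\mu_{g})$, and invoking Lemma~\ref{lem:auxiliary-for-stone} (integration by parts against the Gaussian weight plus completeness of the Chebyshev--Hermite functions) to force $\phi=0$; thus $B$ has a trivial deficiency space and, being a symmetric restriction of the selfadjoint $A$, equals $A$. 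You instead identify $\mu_{\delta_{1}^{(\alpha)}}$ with (a multiple of) $e^{-2\alpha t^{2}}d\mu_{g}$, prove that this measure's Hamburger moment problem is determinate via Carleman's test, and then use the determinate-case dictionary from the end of Section~\ref{sec:matr-repr-clos} to conclude that the closure of $M_{\mu}$ restricted to the polynomials is all of $M_{\mu}$. Both arguments are sound; the paper's is self-contained (it reuses the very lemma that gave cyclicity and needs nothing beyond the upper half-plane deficiency computation), whereas yours is more conceptual, makes the real mechanism --- determinacy of the Gaussian-damped measure --- explicit (in the spirit of Remark~\ref{rem:relation-moments-matrices} and Proposition~\ref{prop:infinite-index}, which rely on exactly this fact via Freud's theorem, an alternative you could have cited in place of Carleman), at the cost of importing a classical criterion not stated in the paper. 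A further point in your favor: you explicitly verify that distinct $\alpha$ give distinct bases (non-proportionality of $e^{-\alpha t^{2}}$ and $e^{-\beta t^{2}}$ on the infinite support of $\mu_{g}$), which is needed for uncountability and which the paper's proof leaves implicit.
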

\begin{proof}
 Let $A$ be a simple operator and $g$ a generating element of it. If
 $\alpha\ge\frac12$, then $\eta(\alpha,g)$ given in
 Definition~\ref{def:stone-vectors} is a cyclic vector of $A$. Due to
 Proposition~\ref{prop:gram-schmidt-basis}, if
 $\{\delta_{k}(\alpha,g)\}_{k=1}^{\infty}$ is the orthonormal basis obtained
 from applying the Gram-Schmidt procedure to the sequence
 $\{A^{k-1}\eta(\alpha,g)\}_{k=1}^{\infty}$, then
 \begin{equation*}
    \inner{\delta_{j}(\alpha,g)}{A\delta_{k}(\alpha,g)}
 \end{equation*}
is a Jacobi matrix which will be denoted by $[A](\alpha,g)$.

It remains to prove that $[A](\alpha,g)$ is the matrix representation
of $A$ with respect to the orthonormal basis
$\{\delta_{k}(\alpha,g)\}_{k=1}^{\infty}$.  According to
Definition~\ref{def:basis-representation}, this boils down to showing
that $A$ is the minimal closed operator associated with the matrix
$[A](\alpha,g)$.

Let $B$ be the operator whose matrix representation is $[A](\alpha,g)$
(on account of what is said in the paragraph below
\eqref{eq:jacobi-matrix} such operator is univocally determined by the
matrix and this operator is symmetric). Assume that $h:=\phi(A)g$ is
orthogonal to $(B-\I I)\delta_{k}(\alpha,g)$ for all
$k\in\nats$, then
\begin{align*}
  0&=\inner{h}{(B-\I I)\delta_{k}(\alpha,g)}
  =\inner{h}{(A-\I I)\delta_{k}(\alpha,g)}\\
   &=\inner{\phi(A)g}{(A-\I I)P_{k-1}(A)e^{-\alpha A^{2}}g}\\
  & =\int_{\reals}\cc{\phi(t)}(t-\I)
    P_{k-1}(t)e^{-\alpha t^{2}}d\mu_{g}(t)\,,
\end{align*}
where the second equality holds since $B\subset A$. In the third
equality, one uses Definition \ref{def:stone-vectors} and
the fact that $\delta_{k}(\alpha,g)=\Psi_{\eta(\alpha,g)}(P_{k-1})$
(see the proof of Proposition~\ref{prop:gram-schmidt-basis}). In the last equality, one
recurs to the isometric property of $\Psi_{g}$.  Thus, for any
$k\in\nats$, one has
\begin{align*}
  0&=\int_{\reals}\cc{\phi(t)}(t-\I)
     t^{k-1}e^{-\alpha t^{2}}d\mu_{g}(t)\\
  &=\int_{\reals}t^{k}e^{-\frac12t^{2}}d\mathcal{F}\,,
\end{align*}
where
\begin{equation*}
  \mathcal{F}(t):=\int_{-\infty}^{t}\cc{\phi(t)}(s-\I)
  e^{-(\alpha-\frac12)s^{2}}d\mu_{g}(s)+ C
\end{equation*}
with $C$ being an arbitrary constant. By
Lemma~\ref{lem:auxiliary-for-stone}, $\mathcal{F}(t)=0$ for
a.\,e. $t\in\reals$. Therefore
\begin{align*}
  0&=\int_{\reals}(t-\I)\cc{\phi(t)}d\mathcal{F}\\
  &=\int_{\reals}\abs{t+\I}^{2}e^{-\frac12t^{2}}\abs{\phi(t)}^{2}d\mu_{g}(t)\,.
\end{align*}
This implies that $\norm{\phi}_{L_{2}(\reals,\mu_{g})}=0$.  Thus, one
concludes that the deficiency space of $B$ on the upper-half plane
is trivial and therefore $B$ is maximal which, in turn, means that it
does not have proper symmetric extensions.
\end{proof}
\begin{remark}
  \label{rem:special-cyclic}
  For any generating element $g$ of $A$ and $\alpha\ge\frac12$, the
  definition of Stone vectors $\eta(\alpha,g)$ by means of a Gaussian
  function guarantees not only cyclicity, but also the fact that
  $\{\delta_{k}(\alpha,g)\}_{k=1}^{\infty}$ is a basis of representation for $A$
  (\cf\ Remark~\ref{rem:example}).
\end{remark}
The following assertion gives necessary and sufficient conditions for a
cyclic vector $\delta$ of $A$ to generate, through the
Gram-Schmidt procedure applied to the sequence
$\{A^{k-1}\delta\}_{k=1}^{\infty}$, a basis of representation for $A$.
\begin{proposition}
  \label{prop:necessary-and-sufficient}
  Let $A$ be a simple operator and $\delta$ a cyclic vector of it. The
  Gram-Schmidt procedure applied to the sequence
  $\{A^{k-1}\delta\}_{k=1}^{\infty}$ yields a basis of representation
  for $A$ if and only if
 \begin{equation}
    \label{eq:special-cyclic}
    \clos\Span_{k\in\nats}\{(A-\I I)A^{k-1}\delta\}=\mathcal{H}\,.
  \end{equation}
\end{proposition}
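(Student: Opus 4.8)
The plan is to work through the isometric representation $\Psi_\delta^*$ and reduce the problem to a density statement in $L_2(\reals,\mu_\delta)$ about the multiplication operator $M_{\mu_\delta}$. Writing $\mu:=\mu_\delta$, the image under $\Psi_\delta^*$ of $A^{k-1}\delta$ is $t^{k-1}$, and by Proposition~\ref{prop:gram-schmidt-basis} the Gram-Schmidt procedure applied to $\{A^{k-1}\delta\}_{k=1}^\infty$ corresponds, under $\Psi_\delta^*$, to Gram-Schmidt applied to $\{t^{k-1}\}_{k=1}^\infty$ in $L_2(\reals,\mu)$, producing the orthonormal polynomial basis $\{P_{k-1}(t)\}_{k=1}^\infty$ with $\delta_k=\Psi_\delta P_{k-1}$. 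The matrix of the minimal closed operator $B$ associated with this basis is the Jacobi matrix built from the recurrence coefficients (Remark~\ref{rem:coincides-with-first-kind}), and $B\subset A$ always holds. So the basis $\{\delta_k\}$ is a basis of representation for $A$ if and only if $B=A$, which in turn (since both are symmetric and $A$ is selfadjoint, hence maximal) is equivalent to $B$ having no proper symmetric extension, i.e.\ to the deficiency space $\ker(B^*-\I I)$ being trivial.

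The second step is to identify, via $\Psi_\delta^*$, what triviality of $\ker(B^*-\I I)$ means. An element $h=\phi(A)\delta$ lies in $\ker(B^*-\I I)$ exactly when $\inner{h}{(B-\I I)\delta_k}=0$ for all $k$, and since $B\delta_k=A\delta_k$, this is $\inner{h}{(A-\I I)\delta_k}=0$ for all $k$, that is, $h\perp (A-\I I)A^{k-1}\delta$ for all $k\in\nats$ (because the span of $\{(A-\I I)\delta_k\}$ equals the span of $\{(A-\I I)A^{k-1}\delta\}$, the two systems being related by an invertible triangular transformation). Hence $\ker(B^*-\I I)=\{0\}$ precisely when the closed span of $\{(A-\I I)A^{k-1}\delta\}_{k\in\nats}$ is all of $\mathcal{H}$, which is exactly condition~\eqref{eq:special-cyclic}. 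Combining: $\{\delta_k\}$ is a basis of representation for $A$ $\iff$ $B=A$ $\iff$ $\ker(B^*-\I I)=\{0\}$ $\iff$ \eqref{eq:special-cyclic} holds. One should also check the symmetric statement for $\ker(B^*+\I I)$; but because the Jacobi matrix $[B]$ has real entries, $\ker(B^*-\I I)$ and $\ker(B^*+\I I)$ have the same dimension (\cf\ Remark~\ref{rem:on-kernel-adjoint-at-zeta}), so it suffices to control the one at $\I$.

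I would present the argument as a chain of equivalences. The direction "basis of representation $\Rightarrow$ \eqref{eq:special-cyclic}" is the easy one: if $B=A$ then $\ker(B^*-\I I)=\ker(A^*-\I I)=\ker(A-\I I)=\{0\}$ since $A$ is selfadjoint, and unravelling the orthogonality gives \eqref{eq:special-cyclic}. The converse is the substantive direction: assuming \eqref{eq:special-cyclic}, one shows the deficiency space of $B$ at $\I$ is trivial, so $B$ is maximal symmetric; since $B\subset A$ and $A$ is selfadjoint (in particular has no proper symmetric extensions, being maximal), one concludes $B=A$, hence $\{\delta_k\}$ satisfies condition (b) of Definition~\ref{def:basis-representation} and is a basis of representation.

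The main obstacle is making the passage between the orthonormal basis $\{\delta_k\}$ and the power sequence $\{A^{k-1}\delta\}$ rigorous on the level of closed linear spans, keeping track of which closures are needed: the finite linear span of $\{\delta_1,\dots,\delta_n\}$ coincides with that of $\{\delta,A\delta,\dots,A^{n-1}\delta\}$ for each $n$, so applying the bounded operator $A-\I I$ gives equality of the corresponding finite spans of $\{(A-\I I)\delta_k\}$ and $\{(A-\I I)A^{k-1}\delta\}$, and therefore equality of their closed spans. There is also the minor technical point that one must know $B^*$ acts as the difference expression on the maximal domain and that $h=\phi(A)\delta\in\ker(B^*-\I I)$ is equivalent to the stated orthogonality — this is standard for Jacobi operators (see \eqref{eq:domain-maximal} and Remark~\ref{rem:on-kernel-adjoint-at-zeta}) and for the abstract deficiency-space characterization. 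Once these identifications are in place the proof is short.
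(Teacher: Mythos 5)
Your argument is correct and follows essentially the same route as the paper: both reduce the question to whether the minimal closed operator $B$ agreeing with $A$ on the Gram--Schmidt basis equals $A$, identify orthogonality to $\{(B-\I I)\delta_{k}\}$ with orthogonality to $\{(A-\I I)A^{k-1}\delta\}$ via the triangular relation between the two systems, and conclude from maximality of a symmetric operator with a trivial deficiency space. The only divergences are minor: the paper finishes the converse direction by a contrapositive computation with the Jacobi recurrence at $z=\I$ and the limit-point property, where you instead use surjectivity of $A-\I I$ for selfadjoint $A$; and two small slips in your write-up are harmless --- orthogonality to $\ran(B-\I I)$ actually characterizes $\ker(B^{*}+\I I)$ rather than $\ker(B^{*}-\I I)$ (immaterial, as you note, since the deficiency indices coincide and either one vanishing already makes $B$ maximal), and $A-\I I$ need not be bounded (irrelevant, since only equality of finite-dimensional spans is used).
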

\begin{proof}
  Let $\{\delta_{k}\}_{k=1}^{\infty}$ be the orthonormal basis
  obtained by the Gram-Schmidt procedure applied to the sequence
  $\{A^{k-1}\delta\}_{k=1}^{\infty}$. Thus,
  $\delta_{k}=P_{k-1}(A)\delta$, where $P_{k}$ is a polynomial of
  degree $k$ (see the proof of
  Proposition~\ref{prop:gram-schmidt-basis}). Denote by $B$ the
  minimal closed operator so that $B\delta_{k}=A\delta_{k}$. 
  Assume first that \eqref{eq:special-cyclic}
  holds. If the vector $h$ is such that $\inner{h}{(B-\I
    I)\delta_{k}}$ vanishes for all $k\in\nats$, then
  \begin{equation}
    \label{eq:sufficiency-aux}
    \inner{h}{(A-\I I)A^{k-1}\delta}=0\,,\quad \forall\,k\in\nats\,.
  \end{equation}
  Therefore, it follows from \eqref{eq:special-cyclic} and
  \eqref{eq:sufficiency-aux} that $h=0$. Since $\ran(B-\I
  I)$ contains $\Span_{k\in\nats}(B-\I I)\delta_{k}$, one concludes that
  the closed symmetric operator $B$ is maximal and therefore $B=A$.

  Now suppose that $B=A$ and \eqref{eq:special-cyclic} does
  not hold, \ie\ there is a nonzero vector $h$ so that $h\perp (A-\I
  I)A^{k-1}\delta$ for all $k\in\nats$. This implies that
  \begin{equation}
    \label{eq:condition-h-delta-k}
    \inner{h}{(B-\I I)\delta_{k}}=0\,,\quad \forall\,k\in\nats\,,
  \end{equation}
  since $\delta_{k}$ is a polynomial of $A$ applied to $\delta$. By
  Proposition~\ref{prop:gram-schmidt-basis}, $B$ is a Jacobi
  operator so one can denote the entries of the corresponding matrix
  as in \eqref{eq:jacobi-matrix}. Therefore, by writing
  $h=\sum_{k=1}^{\infty}h_{k}\delta_{k}$, one obtains from
  \eqref{eq:condition-h-delta-k} that
  \begin{equation*}
  \begin{split}
      \I h_1&:=q_1 h_1 + b_1 h_2\,,\\
     \I h_k&:= b_{k-1}h_{k-1} + q_k h_k + b_kh_{k+1}\,,
  \quad k \in \mathbb{N} \setminus \{1\}\,.
\end{split}
\end{equation*}
By the assumption that $B=A$, $B$ is selfadjoint and
therefore
\begin{equation*}
  \sum_{k=1}^{\infty}\abs{h_{k}}^{2}=+\infty\,.
\end{equation*}
This contradicts the fact that $h$ is a nonzero element of the space.
\end{proof}
To close up this section, we put its results in the context of Jacobi
operators in the selfadjoint and nonselfadjoint cases.

A straightforward consequence of
Proposition~\ref{prop:basis-representation} and
Remark~\ref{rem:delta-1-cyclic} is the following:
\begin{proposition}
  \label{prop:representations-jacobi-operator}
  Let $J$ be the operator whose matrix representation is
  \eqref{eq:jacobi-matrix} with respect to the canonical basis
  $\{\delta_{k}\}_{k=1}^{\infty}$. If \eqref{eq:jacobi-matrix} is in
  the limit point case, then, for any $\alpha\ge \frac12$,
  the basis $\{\delta_{k}(\alpha,g)\}_{k=1}^{\infty}$, constructed
  from any generating element $g$, is a basis of
  matrix representation for $J$ and the corresponding matrix
  $[J](\alpha,g)$ is a Jacobi matrix in the limit point case.
\end{proposition}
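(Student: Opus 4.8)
The plan is to deduce this as a direct combination of two facts already established in the excerpt. First, since \eqref{eq:jacobi-matrix} is in the limit point case, the Jacobi operator $J$ is selfadjoint, and by \eqref{eq:polynomials-show-simplicity} together with Remark~\ref{rem:delta-1-cyclic}, $J$ is a simple selfadjoint operator with $\delta_{1}$ as a cyclic vector. In particular $J$ has a generating element; let $g$ be any one of them (for instance $g=\delta_{1}$). Then for any $\alpha\ge\frac12$, Corollary~\ref{cor:stonecyclic} tells us that the Stone vector $\eta(\alpha,g)=\exp(-\alpha J^{2})g$ is a cyclic vector of $J$.

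Next I would invoke Proposition~\ref{prop:basis-representation} and its proof: applying the Gram-Schmidt procedure to $\{J^{k-1}\eta(\alpha,g)\}_{k=1}^{\infty}$ yields an orthonormal basis $\{\delta_{k}(\alpha,g)\}_{k=1}^{\infty}$, and the proof of Proposition~\ref{prop:basis-representation} shows precisely that this basis is a basis of matrix representation for $J$ (the key point there being that the minimal closed operator $B$ associated with the resulting Jacobi matrix is maximal, hence equals $J$; see also Remark~\ref{rem:special-cyclic}). By Proposition~\ref{prop:gram-schmidt-basis}, the corresponding matrix $[J](\alpha,g)$, with entries $\inner{\delta_{j}(\alpha,g)}{J\delta_{k}(\alpha,g)}$, is a semi-infinite Jacobi matrix.

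It then remains to check that $[J](\alpha,g)$ is again in the limit point case. This is immediate: the operator whose matrix representation (with respect to the basis $\{\delta_{k}(\alpha,g)\}_{k=1}^{\infty}$) is $[J](\alpha,g)$ is $J$ itself, which is selfadjoint; and by Remark~\ref{rem:on-kernel-adjoint-at-zeta} and the discussion around \eqref{eq:radii-weyl-circles}, a Jacobi matrix is in the limit point case precisely when its associated Jacobi operator is selfadjoint. Since selfadjointness is a property of the operator and not of the chosen basis of representation, $[J](\alpha,g)$ is in the limit point case.

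I do not expect any genuine obstacle here, as all the substantive work has been done in Proposition~\ref{prop:basis-representation}, Proposition~\ref{prop:gram-schmidt-basis}, Corollary~\ref{cor:stonecyclic}, and Remark~\ref{rem:delta-1-cyclic}; the proposition is essentially a bookkeeping statement that packages these for the Jacobi-operator setting. The only point requiring a word of care is the observation in the last paragraph that the limit point/limit circle classification transfers along a change of basis of representation, which follows because it is equivalent to selfadjointness of the (basis-independent) operator $J$.
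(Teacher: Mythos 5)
Your proposal is correct and follows exactly the route the paper intends: the paper gives no separate proof, stating only that the proposition is ``a straightforward consequence of Proposition~\ref{prop:basis-representation} and Remark~\ref{rem:delta-1-cyclic},'' which is precisely the combination you spell out. Your added observation that the limit point property of $[J](\alpha,g)$ follows because the associated minimal closed operator is $J$ itself (selfadjoint) is the right way to close the remaining gap the paper leaves implicit.
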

\begin{remark}
  \label{rem:relation-moments-matrices}
  If the hypotheses of the preceding proposition hold, then, for each
  generating element $g$ of $J$ and $\alpha\ge\frac12$, the Jacobi
  matrix $[J](\alpha,g)$ generates a sequence of moments
  $\{s_{k}(\alpha,g)\}_{k=0}^{\infty}$ so that the solution to the
  corresponding moment problem is unique. By \cite[Ch.\,5 Sec.\,3
  Lem.\,3]{MR1192782}, this unique solution is given by
  \begin{equation}
    \label{eq:sol-moment-alpha}
    \mu_{\eta(\alpha,g)}(\partial)=\int_{\partial}e^{-2\alpha
      t^{2}}d\mu_{g}(t)
  \end{equation}
  for any Borel set $\partial$.

  According to Remark~\ref{rem:delta-1-cyclic}, if $J$ is a
  nonselfadjoint Jacobi operator, then, for each selfadjoint extension
  $\tilde J$ and $\alpha\ge\frac12$, the Jacobi
  matrix $[\tilde J](\alpha,\delta_{1})$ is associated with a
  determinate moment problem whose unique solution is
\begin{equation}
    \label{eq:sol-moment-alpha}
    \mu_{\eta(\alpha,\delta_{1})}(\partial)=\int_{\partial}e^{-2\alpha
      t^{2}}d\tilde\mu_{\delta_{1}}(t)
  \end{equation}
  for any Borel set $\partial$, where $\tilde\mu_{\delta_{1}}$ is
  given by \eqref{eq:def-mu-f} with $h=\delta_{1}$ and $E$ being the
  spectral measure of $\tilde J$.
  \end{remark}
\section{Index of determinacy and bases of representation}
\label{sec:index-determ-bases}

In the previous section, bases of matrix representation for a
selfadjoint Jacobi operator $J$ were constructed from an arbitrary
generating element of it by means of the Stone vectors
(Definition~\ref{def:stone-vectors}). The matrices representing $J$
with respect to these bases were Jacobi matrices. The function
involved in Definition~\ref{def:stone-vectors} guarantee not only that
the Stone vector is a cyclic vector, but also that the basis obtained
from it is a basis of matrix representation.

In this section, an alternative method is used for the construction of
bases of matrix representation of a selfadjoint Jacobi operator so
that the corresponding matrix is a Jacobi matrix. This method is
related to the so-called index of determinacy \cite{MR1308001} of a
solution to the moment problem.

Let $\mu$ be a Borel measure. For any Borel set $\partial$, denote
 \begin{equation}
    \label{eq:leser-determinacy-measure}
    \mu_{n}(\partial):=\int_{\partial}(1+x^{2})^{n}d\mu(x)\,,\quad n\in\nats\,.
  \end{equation}
Note that $\mu_{n}$ is obtained by applying the transformation
\eqref{eq:leser-determinacy-measure} with $n=1$ to the measure
$\mu_{n-1}$.

Two classical results pertaining to the density of
polynomials in $L_{2}$ spaces are Propositions
\ref{prop:density-polynomials} and
\begin{proposition}
  \label{prop:riez-index}
  The measure $\mu$ is the solution to a determinate Hamburger moment problem if
  and only if the polynomials are dense in  $L_{2}(\reals,\mu_{1})$.
\end{proposition}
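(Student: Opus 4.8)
The plan is to convert the weighted density statement into a range condition for $M_{\mu}-\I I$ on polynomials and then, through the isometry attached to $\mu$, into a deficiency-index statement for the Jacobi operator. The pivotal observation is the elementary identity $1+x^{2}=\abs{x-\I}^{2}$ for $x\in\reals$: it shows that the map $V\colon L_{2}(\reals,\mu_{1})\to L_{2}(\reals,\mu)$ given by $(Vf)(x):=(x-\I)f(x)$ is unitary, its inverse being multiplication by $(x-\I)^{-1}$. Since $V$ carries the polynomials onto $(x-\I)\mathcal{P}$, one obtains at once that $\mathcal{P}$ is dense in $L_{2}(\reals,\mu_{1})$ if and only if $(x-\I)\mathcal{P}$ is dense in $L_{2}(\reals,\mu)$.

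Before exploiting this, I would dispose of the case in which $\mathcal{P}$ is not dense in $L_{2}(\reals,\mu)$: then $(x-\I)\mathcal{P}\subseteq\mathcal{P}$ is not dense in $L_{2}(\reals,\mu)$ either, so the second assertion of the proposition fails; and by Proposition~\ref{prop:density-polynomials} the measure $\mu$ is then not the extremal measure \eqref{eq:extremal-measure}, hence (the unique solution of a determinate problem being necessarily that extremal measure) $\mu$ does not solve a determinate moment problem, so the first assertion fails too. Thus one may assume $\mathcal{P}$ is dense in $L_{2}(\reals,\mu)$. In that case the map $U$ of Definition~\ref{def:map-U} is a unitary operator of $L_{2}(\reals,\mu)$ onto $l_{2}(\nats)$, and by the three-term relation (Proposition~\ref{prop:three-term-relation}; cf.\ \eqref{eq:polynomials-show-simplicity}) it carries the restriction of $M_{\mu}$ to $\mathcal{P}$ onto $J_{0}$, the restriction of the Jacobi operator $J$ to $l_{\rm fin}(\nats)$. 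Since $(x-\I)\mathcal{P}$ is exactly the image of $\mathcal{P}$ under $M_{\mu}-\I I$, it follows that $U\bigl((x-\I)\mathcal{P}\bigr)=\ran(J_{0}-\I I)$, and the whole question reduces to showing that $\ran(J_{0}-\I I)$ is dense in $l_{2}(\nats)$ if and only if $J$ is selfadjoint.

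For this final equivalence I would use $J^{*}=J_{0}^{*}$ to write $\overline{\ran(J_{0}-\I I)}=\ker(J^{*}+\I I)^{\perp}$, so that density of $\ran(J_{0}-\I I)$ is the same as $\ker(J^{*}+\I I)=\{0\}$; by Remark~\ref{rem:on-kernel-adjoint-at-zeta} (applied with $z=-\I$) this holds exactly when $\pi(-\I)\notin l_{2}(\nats)$, that is, when $n_{+}(J)=n_{-}(J)=0$, i.e.\ when $J$ is selfadjoint, which, as recalled in Section~\ref{sec:matr-repr-clos}, is precisely the statement that $\mu$ solves a determinate Hamburger moment problem. Chaining these equivalences yields the proposition. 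I do not expect a real obstacle: the one point that demands care is the passage between $J_{0}$ and $J$ — in particular the use of $J^{*}=J_{0}^{*}$ to make ``$\ran(J_{0}-\I I)$ dense'' equivalent in both directions to selfadjointness of $J$ — together with checking that every link in the chain is reversible. The only nonroutine ingredient is the factorization $1+x^{2}=\abs{x-\I}^{2}$, which is what turns the weighted $L_{2}$-density into a range condition.
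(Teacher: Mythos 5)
Your proof is correct, and it is worth noting that the paper does not actually prove this proposition: it is the classical theorem of M.~Riesz, and the authors simply cite \cite{mriesz1923}, \cite[Lem.\,A]{MR1308001} and \cite[Cor.\,6.11]{MR3729411}. Your argument is essentially the standard modern route to that theorem: the factorization $1+x^{2}=\abs{x-\I}^{2}$ turns density of $\mathcal{P}$ in $L_{2}(\reals,\mu_{1})$ into density of $(x-\I)\mathcal{P}$ in $L_{2}(\reals,\mu)$, and the latter, after the preliminary reduction to the case where $\mathcal{P}$ is dense in $L_{2}(\reals,\mu)$ (so that $U$ and Proposition~\ref{prop:three-term-relation} apply and conjugate $M_{\mu}\eval{\mathcal{P}}$ to $J_{0}$), becomes the statement that $\ran(J_{0}-\I I)$ is dense, i.e.\ $\ker(J^{*}+\I I)=\{0\}$; Remark~\ref{rem:on-kernel-adjoint-at-zeta} ($n_{+}(J)=n_{-}(J)$) then lets you get by with a single deficiency space, and the selfadjoint/determinate correspondence of Section~\ref{sec:matr-repr-clos} closes the chain. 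Every link is indeed reversible, and your separate treatment of the non-dense case (where both sides of the equivalence fail, via Proposition~\ref{prop:density-polynomials} and the fact that the unique solution of a determinate problem is the extremal measure \eqref{eq:extremal-measure}) correctly legitimizes the later use of $U$, which the paper defines only for extremal $\mu$. The only tacit hypothesis you should make explicit is that $\mu$ has finite moments of all orders, so that $\mathcal{P}\subset L_{2}(\reals,\mu_{1})$ and the orthonormal polynomials exist; this is the standing assumption under which the proposition is stated.
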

The proof of this assertion is found in \cite{mriesz1923} (see also
\cite[Lem.\,A]{MR1308001} and \cite[Cor.\,6.11]{MR3729411}).

\begin{definition}
  \label{def:index-determinacy}
  The index of determinacy of a solution $\mu$ to a Hamburger moment
  problem is
  \begin{equation*}
    \ind\mu:=\sup\{n\in\nats: \text{ the polynomials are dense in }
    L_{2}(\reals,\mu_{n})\}.
  \end{equation*}
  It is not excluded that $\ind\mu$ could be $\infty$, which takes
  place when the polynomials are dense in $L_{2}(\reals,\mu_{n})$ for
  any $n\in\nats$.
\end{definition}
This definition differs from the one in \cite{MR1308001}. If $\mu$ has
index $n$ according to \cite[Eq.\,1.1]{MR1308001}, then $\ind\mu=n+1$
by Definition~\ref{def:index-determinacy}. The index of determinacy in
this paper is so that any determinate measure has positive index of
determinacy. Indeed, by Proposition~\ref{prop:riez-index}, the index
of determinacy makes sense only for solutions to determinate moment
problems and for any such solution $\mu$, $\ind\mu\ge1$. Note that the
index of determinacy decreases one unit each time the transformation
\eqref{eq:leser-determinacy-measure} with $n=1$ is applied. Also, it
follows from Propositions \ref{prop:density-polynomials} and
\ref{prop:riez-index} that if $\ind\mu=1$, then $\mu_{1}$ is given by
\eqref{eq:extremal-measure} with $E$ being the spectral measure of a
canonical selfadjoint extension of a nonselfadjoint Jacobi operator
and $\mu_{2}$ is such that the polynomials are no longer dense in
$L_{2}(\reals,\mu_{2})$.

By reverting the transformation \eqref{eq:leser-determinacy-measure},
one can increase the index of determination of a given measure.
Indeed, let $J$ be a nonselfadjoint Jacobi operator in $l_{2}(\nats)$
and $[J]$ its matrix representation with respect to the canonical
basis $\{\delta_{k}\}_{k=1}^{\infty}$. Fix a canonical selfadjoint
extension $\tilde J$ of $J$ and denote by $\mu$ the measure given by
\eqref{eq:extremal-measure} with $E$ being the spectral measure of
$\tilde J$. Now, for any
Borel set $\partial$ and $n\in\nats$, define
\begin{equation*}
  \nu_{n}(\partial):=\frac{1}{C}\int_{\partial}(1+x^{2})^{-n}d\mu(x)\,,\quad{
  where }\quad C:=
  \int_{\reals}(1+x^{2})^{-n}d\mu(x)\,.
\end{equation*}
Due to Propositions~\ref{prop:density-polynomials} and
\ref{prop:riez-index}, one verifies that according to
Definition~\ref{def:index-determinacy} $\ind\nu_{n}=n$ for any
$n\in\nats$.

Let $\{R_{k-1}\}_{k=1}^{\infty}$ be the orthonormal
sequence of polynomials in $L_{2}(\reals,\nu_{1})$ obtained from
monomials by the Gram-Schmidt procedure. Define
\begin{equation}
  \label{eq:orthonormal-functions}
  f_{k-1}(t):=\frac{1}{\sqrt{C}(t-\I)}R_{k-1}(t)\,,\quad
  \text{ for all }\quad k\in\nats\,.
\end{equation}
Thus,
\begin{equation}
  \label{eq:orthonormality-f}
  \inner{f_{k}}{f_{j}}_{L_{2}(\reals,\mu)}=
  \frac{1}{C}\int_{\reals}(1+t^{2})^{-1}\cc{R_{k}(t)}R_{j}(t)d\mu=
  \int_{\reals}\cc{R_{k}(t)}R_{j}(t)d\nu_{1}=\delta_{jk}
\end{equation}
so that $\{f_{k-1}\}_{k=1}^{\infty}$ is orthonormal in
$L_{2}(\reals,\mu)$. This orthonormal system is also complete due to
the fact that if, for any $k\in\nats$,
\begin{equation*}
  0=\inner{f_{k-1}}{h}_{L_{2}(\reals,\mu)}=
  \frac{1}{\sqrt{C}}\int_{\reals}(t-\I)^{-1}R_{k-1}(t)h(t)d\mu(t)\,,
\end{equation*}
then $h(t)=0$ for $\mu$-a.\,e. $t\in\reals$ since the polynomials are
dense in $L_{2}(\reals,\mu)$ and $(t-\I)^{-1}$ never vanishes on the
real line. It is equally straightforward to establish that $f_{k-1}$ is
in the domain of the multiplication operator $M_{\mu}$ (see
Definition~\ref{def:operator-multiplication}) for any $k\in\nats$ since $R_{k-1}$ is in the
domain of $M_{\nu_{1}}$ for any $k\in\nats$.

Thus, the orthonormal basis $\{f_{k-1}\}_{k=1}^{\infty}$ in
$L_{2}(\reals,\mu)$ satisfies (a) of
Definition~\ref{def:basis-representation} with respect to the operator
of multiplication $M_{\mu}$.

\begin{proposition}
  \label{prop:f-is-basis-representation}
  If $\mu$ is a solution to an indeterminate moment problem, then the
  orthonormal basis $\{f_{k-1}\}_{k=1}^{\infty}$ defined above is a
  basis of matrix representation for the operator $M_{\mu}$.
\end{proposition}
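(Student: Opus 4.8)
The plan is to verify condition~(b) of Definition~\ref{def:basis-representation} for the operator $M_{\mu}$ relative to the orthonormal basis $\{f_{k-1}\}_{k=1}^{\infty}$; condition~(a) has already been checked in the passage preceding the statement. To this end I would let $B$ be the minimal closed operator satisfying $Bf_{k-1}=M_{\mu}f_{k-1}$ for all $k\in\nats$. Such an operator exists: it is the closure of the restriction of the (selfadjoint, hence closed) operator $M_{\mu}$ to $\Span\{f_{k-1}\}$, which is closable because $M_{\mu}$ is closed. Consequently $B$ is a densely defined closed symmetric operator with $B\subseteq M_{\mu}$, and every closed operator agreeing with $M_{\mu}$ on each $f_{k-1}$ contains $B$. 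Thus condition~(b) holds precisely when $B=M_{\mu}$, and that is what I would prove.

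The argument follows the template of the proofs of Propositions~\ref{prop:basis-representation} and \ref{prop:necessary-and-sufficient}. The decisive point is the elementary identity
$(B-\I I)f_{k-1}=(M_{\mu}-\I I)f_{k-1}=(t-\I)f_{k-1}(t)=\frac{1}{\sqrt{C}}R_{k-1}(t)$,
which converts the awkward functions $f_{k-1}$ into the polynomials $R_{k-1}$ once $M_{\mu}-\I I$ is applied. So I would take $h\in L_{2}(\reals,\mu)$ orthogonal to $(B-\I I)f_{k-1}$ for every $k$; the identity then gives $\inner{R_{k-1}}{h}_{L_{2}(\reals,\mu)}=0$ for all $k\in\nats$. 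Since the $R_{k-1}$ have degrees $0,1,2,\dots$, their linear span is the set of all polynomials, and since $\mu$ is the measure \eqref{eq:extremal-measure}, Proposition~\ref{prop:density-polynomials} ensures that the polynomials are dense in $L_{2}(\reals,\mu)$; hence $h=0$. Because $\ran(B-\I I)\supseteq\Span_{k\in\nats}(B-\I I)f_{k-1}$, this shows the deficiency space of $B$ in the upper half-plane is trivial, so $B$ is maximal symmetric; being contained in the symmetric operator $M_{\mu}$, it must coincide with it, and $B=M_{\mu}$.

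I do not anticipate a genuine obstacle: the analytically delicate facts — orthonormality and completeness of $\{f_{k-1}\}$ in $L_{2}(\reals,\mu)$ and the inclusions $f_{k-1}\in\dom M_{\mu}$ — were already established above, and the hypothesis that $\mu$ solves an indeterminate moment problem enters only through the density of polynomials furnished by Proposition~\ref{prop:density-polynomials}. The points needing a little care are purely bookkeeping: checking that ``the minimal closed operator $B$ with $Bf_{k-1}=M_{\mu}f_{k-1}$'' is exactly the object condition~(b) of Definition~\ref{def:basis-representation} refers to; recording that $B$ is symmetric because it is the closure of a restriction of the selfadjoint $M_{\mu}$; and noting that the $R_{k-1}$ may be taken with real coefficients, so that complex conjugation inside the inner product is harmless when one passes from $\inner{R_{k-1}}{h}=0$ for all $k$ to $h$ being orthogonal to every polynomial.
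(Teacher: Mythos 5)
Your proof is correct and follows essentially the same route as the paper's: the decisive point in both is that $M_{\mu}-\I I$ carries the basis (respectively, the orbit of $f_{0}$ under $M_{\mu}$) onto polynomials, whose density in $L_{2}(\reals,\mu)$ is supplied by Proposition~\ref{prop:density-polynomials}, and the conclusion is reached by the same maximality argument (dense range of $B-\I I$ forces the minimal closed operator $B$ to coincide with $M_{\mu}$). The only structural difference is that the paper first identifies $\{f_{k-1}\}_{k=1}^{\infty}$ as the Gram-Schmidt orthonormalization of $\{M_{\mu}^{k-1}f_{0}\}_{k=1}^{\infty}$ with $f_{0}=(\sqrt{C}(t-\I))^{-1}$ in order to invoke Proposition~\ref{prop:necessary-and-sufficient}, whereas you inline the sufficiency half of that proposition's proof and work directly with $(M_{\mu}-\I I)f_{k-1}=R_{k-1}/\sqrt{C}$, which lets you bypass the Gram-Schmidt identification altogether.
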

\begin{proof}
  Item (a) of Definition~\ref{def:basis-representation} has already
  been established. Let us show that the sequence
  $\{f_{k-1}\}_{k=1}^{\infty}$ is obtained by the Gram-Schmidt
  procedure applied to the sequence
  $\{t^{k-1}(\sqrt{C}(t-\I))^{-1}\}_{k=1}^{\infty}$ in
  $L_{2}(\reals,\mu)$. Indeed, proceeding as in
  \eqref{eq:orthonormality-f}, one has for the first step of the
  Gram-Schmidt algorithm \cite[Ch.\,2 Sec.\,2 Thm.\,5]{MR1192782}
  \begin{align}
    &(\sqrt{C}(t-\I))^{-1}[t-
      \inner{t(\sqrt{C}(t-\I))^{-1}}{(\sqrt{C}(t-\I))^{-1}}_{L_{2}(\reals,\mu)}]
    \nonumber\\
    =&(\sqrt{C}(t-\I))^{-1}[t-
      \inner{t}{1}_{L_{2}(\reals,\nu_{1})}]\,.\label{eq:first-step-gs}
  \end{align}
  The expression in the square brackets of \eqref{eq:first-step-gs} is
  the first step of Gram-Schmidt procedure applied to the sequence
  $\{t^{k-1}\}_{k=1}^{\infty}$ in $L_{2}(\reals,\nu_{1})$. By
  induction, taking into account \eqref{eq:orthonormal-functions}, one
  verifies that $\{f_{k-1}\}_{k=1}^{\infty}$ is the result of
  orthonormalizing the sequence
  $\{t^{k-1}(\sqrt{C}(t-\I))^{-1}\}_{k=1}^{\infty}$. In particular,
  this shows that $f_{0}=(\sqrt{C}(t-\I))^{-1}$ is a cyclic vector of
  $M_{\mu}$ since $\{f_{k-1}\}_{k=1}^{\infty}$ is total in $L_{2}(\reals,\mu)$.

  Now, for any $k\in\nats$, one has
  \begin{equation*}
    [(M_{\mu}-\I I) M_{\mu}^{k-1}f_{0}](t)=t^{k-1}\,,
  \end{equation*}
  which, on the basis of Proposition~\ref{prop:density-polynomials},
  implies that \eqref{eq:special-cyclic} holds for $M_{\mu}$. Thus,
  Proposition~\ref{prop:necessary-and-sufficient} leads to the desired
  conclusion.
\end{proof}
  As a consequence of Propositions~\ref{prop:gram-schmidt-basis} and
  \ref{prop:f-is-basis-representation}, the matrix representation of
  $M_{\mu}$ with respect to $\{f_{k-1}\}_{k=1}^{\infty}$ is a Jacobi
  matrix. This matrix can be found by observing that the sequence
  $\{f_{k-1}\}_{k=1}^{\infty}$ satisfies the same three-term
  recurrence relation that the sequence of polynomials
  $\{R_{k-1}\}_{k=1}^{\infty}$ does. Since $\ind\nu_{1}=1$, the
  coefficients of the recurrence relation form a Jacobi matrix in the
  limit point case, which is denoted by $[\hat J]$. Hence, the matrix
  representation of $M_{\mu}$ with respect to
  $\{f_{k-1}\}_{k=1}^{\infty}$ is $[\hat J]$.
  Note that $U^{-1}\tilde JU=M_{\mu}$, where $U$ is the map given in
  Definition~\ref{def:map-U}. Thus, if one defines
  \begin{equation*}
    \omega_{k}:=Uf_{k-1}\,,\quad\text{ for all}\quad k\in\nats\,,
  \end{equation*}
  then $\{\omega_{k}\}_{k=1}^{\infty}$ is a basis of matrix
  representation for $\tilde J$ and the corresponding matrix is
  $[\hat J]$. As has been said before (see Remark~\ref{rem:example}),
  $\{\delta_{k}\}_{k=1}^{\infty}$ is not a basis of representation for
  $\tilde J$. Note that the sequence $\{\omega_{k}\}_{k=1}^{\infty}$ is
  not in $\dom J$ since otherwise the minimality condition (b) of
  Definition~\ref{def:basis-representation} for $\tilde J$ is
  violated.

It is worth mentioning that the measure $\nu_{1}$, which gives rise to the matrix
$[\hat J]$, has the smallest index that a determinate measure could
have. Furthermore, for any $n\in\nats$, by using $\nu_{n}$ and
modifying accordingly \eqref{eq:orthonormal-functions}, one can
construct a basis of matrix representation for $\tilde
J$ so that the corresponding matrix is a Jacobi matrix.

In contrast to the construction given above, the measures appearing
in Section~\ref{sec:self-simpl-oper} have infinite index of
determinacy. This is asserted in the following proposition.
\begin{proposition}
  \label{prop:infinite-index}
  For any $\alpha>0$, the measure $\mu_{\eta(\alpha,g)}$ given in
  \eqref{eq:sol-moment-alpha} has infinite index of determinacy.
\end{proposition}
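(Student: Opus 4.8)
The plan is to show that $\mu_{\eta(\alpha,g)}$ remains a determinate solution after applying the transformation \eqref{eq:leser-determinacy-measure} arbitrarily many times, which by Definition~\ref{def:index-determinacy} is exactly the assertion $\ind\mu_{\eta(\alpha,g)}=\infty$. The key observation is that, for $\partial$ a Borel set,
\begin{equation*}
  (\mu_{\eta(\alpha,g)})_{n}(\partial)=\int_{\partial}(1+t^{2})^{n}e^{-2\alpha t^{2}}d\mu_{g}(t)\,,
\end{equation*}
because of the explicit Gaussian density in \eqref{eq:sol-moment-alpha}. So each transformation \eqref{eq:leser-determinacy-measure} only multiplies the density by the polynomial factor $(1+t^{2})$, and the Gaussian $e^{-2\alpha t^{2}}$ is never consumed; in particular $(\mu_{\eta(\alpha,g)})_{n}$ is again a finite Borel measure whose density against $\mu_{g}$ is of the form $p(t)e^{-2\alpha t^{2}}$ with $p$ a polynomial (bounded above and below on compacts, with controlled growth).

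First I would verify the displayed formula for $(\mu_{\eta(\alpha,g)})_{n}$ directly from \eqref{eq:leser-determinacy-measure} and \eqref{eq:sol-moment-alpha} by induction on $n$. Second, I would fix $n\in\nats$ and show the polynomials are dense in $L_{2}(\reals,(\mu_{\eta(\alpha,g)})_{n})$. The route is to reduce this to the Chebyshev–Hermite closure argument already packaged in Lemma~\ref{lem:auxiliary-for-stone} and Lemma~\ref{lem:cyclic-every-measure}: a function $\phi\in L_{2}(\reals,(\mu_{\eta(\alpha,g)})_{n})$ orthogonal to all monomials satisfies, for every $k\in\nats\cup\{0\}$,
\begin{equation*}
  0=\int_{\reals}\cc{\phi(t)}\,t^{k}(1+t^{2})^{n}e^{-2\alpha t^{2}}d\mu_{g}(t)\,,
\end{equation*}
and absorbing part of the Gaussian I can rewrite this as $\int_{\reals}t^{k}e^{-\frac12 t^{2}}d\mathcal G(t)=0$ with
\begin{equation*}
  \mathcal G(t):=\int_{-\infty}^{t}\cc{\phi(s)}(1+s^{2})^{n}e^{-(2\alpha-\frac12)s^{2}}d\mu_{g}(s)+C\,,
\end{equation*}
which is legitimate provided $2\alpha-\tfrac12>0$; for small $\alpha$ one first applies the semigroup identity $\eta(\alpha,g)=\exp(-(\alpha-\beta)A^{2})\eta(\beta,g)$ to replace $g$ by a Stone vector of large enough order, so the reduction is always available. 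Lemma~\ref{lem:auxiliary-for-stone} then forces $\mathcal G\equiv0$ a.e., whence $\norm{\phi}^{2}=\int_{\reals}(1+t^{2})^{n}e^{(2\alpha-\frac12)t^{2}}\phi(t)\,d\mathcal G(t)=0$, so the polynomials are dense. Since $n$ was arbitrary, Definition~\ref{def:index-determinacy} gives $\ind\mu_{\eta(\alpha,g)}=\infty$.

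The main obstacle is bookkeeping of the Gaussian exponents and integrability: one must check that $(1+s^{2})^{n}e^{-(2\alpha-\frac12)s^{2}}$ is integrable against $\mu_{g}$ (it is, $\mu_{g}$ being finite and the integrand bounded), that $\mathcal G$ is well defined and of the type required by Lemma~\ref{lem:auxiliary-for-stone} (a $\sigma$-finite — indeed finite — Borel measure's indefinite integral of an $L_{2}$ function, which holds after the Cauchy–Schwarz estimate $|\cc\phi(s)|(1+s^{2})^{n}e^{-(2\alpha-\frac12)s^{2}}\in L_{1}$), and finally that the step exponents recombine correctly so that no exponential blows up. None of this is deep; it is the same mechanism used in Lemma~\ref{lem:cyclic-every-measure} and in the proof of Proposition~\ref{prop:basis-representation}, merely with the extra polynomial weight $(1+t^{2})^{n}$ carried along. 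Alternatively, and perhaps more cleanly, one can argue without re-running the Hermite argument: by Lemma~\ref{lem:cyclic-every-measure} the function $e^{-\alpha t^{2}}$ satisfies \eqref{eq:dense-span-cyclic} for the finite measure $(1+t^{2})^{n}d\mu_{g}$, and this says precisely that polynomials times $e^{-\alpha t^{2}}$ are dense in $L_{2}(\reals,(1+t^{2})^{n}d\mu_{g})$, which — after the substitution identifying $(\mu_{\eta(\alpha,g)})_{n}$ with $(1+t^{2})^{n}e^{-2\alpha t^{2}}d\mu_{g}$ and a further application of Lemma~\ref{lem:cyclic-every-measure} to split one Gaussian factor — yields density of the bare polynomials in $L_{2}(\reals,(\mu_{\eta(\alpha,g)})_{n})$. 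Either way the proposition follows for every $n$, hence $\ind\mu_{\eta(\alpha,g)}=\infty$.
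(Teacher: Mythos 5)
Your route is genuinely different from the paper's (which argues by contradiction: a measure of finite index is discrete by Berg--Duran and admits a finite point-mass perturbation $\tilde\mu$ that is indeterminate, yet every such $\tilde\mu$ still satisfies $\int_{\reals}e^{\abs{t}}d\tilde\mu<+\infty$ and is therefore determinate by the exponential-moment criterion). Your direct density argument is workable in principle, but as written it has a genuine gap in exactly the regime the proposition is about, namely small $\alpha$. First, the bookkeeping threshold is wrong: for Lemma~\ref{lem:auxiliary-for-stone} you need $f(s)=\cc{\phi(s)}(1+s^{2})^{n}e^{-(2\alpha-\frac12)s^{2}}$ to lie in $L_{2}(\reals,\mu_{g})$ (and in $L_{1}$, so that $\mathcal G$ and $C_{0}$ make sense). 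Writing $\abs{f}^{2}=\abs{\phi}^{2}(1+s^{2})^{n}e^{-2\alpha s^{2}}\cdot(1+s^{2})^{n}e^{(1-2\alpha)s^{2}}$, the first factor is integrable because $\phi\in L_{2}(\reals,(\mu_{\eta(\alpha,g)})_{n})$, but the second factor is bounded only when $\alpha>\tfrac12$; for $\alpha\le\tfrac12$ it grows, and since $\mu_{g}$ is merely finite (with no moment control, as $g$ is an arbitrary generating element) the integral can diverge. So the condition is $\alpha>\tfrac12$, not $2\alpha-\tfrac12>0$.

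Second, your proposed repair for small $\alpha$ via the semigroup identity goes the wrong way. From $\eta(\alpha,g)=e^{-(\alpha-\beta)A^{2}}\eta(\beta,g)$ with $0<\beta<\alpha$ you obtain $\eta(\alpha,g)=\eta(\alpha-\beta,\eta(\beta,g))$, a Stone vector of \emph{smaller} order over a new generating element; to exhibit $\eta(\alpha,g)$ as a Stone vector of order $\gamma>\alpha$ you would need $g\in\dom\bigl(e^{(\gamma-\alpha)A^{2}}\bigr)$, which fails in general. The same defect afflicts your ``cleaner'' alternative: Lemma~\ref{lem:cyclic-every-measure} is stated only for $\alpha\ge\tfrac12$, and moreover $(1+t^{2})^{n}d\mu_{g}$ need not be a finite measure (again because $g$ need not lie in $\dom A^{n}$), so the lemma does not apply to it as stated. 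The honest fix within your framework is to redo the Hermite closure argument with a rescaled weight: the substitution $t\mapsto t/\sqrt{2\epsilon}$ shows that $\{t^{k}e^{-\epsilon t^{2}}\}_{k\ge0}$ is total in $L_{2}(\reals,\rho)$ for every finite Borel measure $\rho$ and every $\epsilon>0$, which applied to the finite measure $\rho=(1+t^{2})^{n}e^{-2(\alpha-\epsilon)t^{2}}d\mu_{g}$ with $0<\epsilon<\alpha$ gives the density of polynomials in $L_{2}(\reals,(\mu_{\eta(\alpha,g)})_{n})$ for all $\alpha>0$. Without some such rescaling, your proof only covers $\alpha>\tfrac12$.
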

\begin{proof}
  Denote by $\chi_{t}$ the measure
  \begin{equation*}
    \chi_{t}(\partial):=
    \begin{cases}
      1 & t\in\partial\\
      0 & t\not\in\partial\,,
    \end{cases}
  \end{equation*}
  where $\partial\subset\reals$ is a Borel set.

  If one assumes that $\mu_{\eta(\alpha,g)}$ has finite index, then
  the measure is discrete \cite[Cor.\,3.4]{MR1308001} and according to
  \cite[Thm.\, 3.9]{MR1308001}
  there is a finite collection of real numbers $t_{1},\dots,t_{n}$ out of the support of
  $\mu_{\eta(\alpha,g)}$ so that, for any positive numbers $a_{1},\dots,a_{n}$,
  \begin{equation*}
    \tilde\mu := \mu_{\eta(\alpha,g)}+\sum_{k=1}^{n}a_{k}\chi_{t_{k}}
  \end{equation*}
  is indeterminate. But, one verifies
  \begin{equation*}
  \int_{\reals}e^{\abs{t}}d\tilde\mu(t)=\int_{\reals}e^{\abs{t}-2\alpha
  t^{2}}d\mu_{g}(t)+\sum_{k=1}^{n}a_{k}e^{\abs{t_{k}}}\chi_{t_{k}}<+\infty\,.
\end{equation*}
Whence, by \cite[Thm.\,5.2]{MR0481888}, one concludes that $\tilde\mu$
is determinate which is a contradiction.
\end{proof}

\section{Non-selfadjoint Jacobi operators}
\label{sec:non-self-jacobi}

This section begins with an account on some of the remarkable
properties of the class of entire operators \cite{MR0012177} to which
the class of nonselfadjoint Jacobi operators belongs. One of these
properties leads to the fact that a nonselfadjoint Jacobi operator has
a unique matrix representation being a Jacobi matrix.

\begin{definition}
  \label{def:regularity}
  A closed operator $A$ in a Hilbert space $\cH$ is said to be
  regular when for any $z\in\complex$, there is a constant $C>0$
  (which could depend on $z$) such that
\begin{equation*}
  \norm{(A-zI)\phi}\ge C\norm{\phi}
\end{equation*}
for all $\phi\in\dom A$.
\end{definition}
The fact that an operator is regular means that the its spectral
kernel is empty, therefore every regular symmetric operator is
completely nonselfadjoint (\ie\ there is no invariant subspace of the
operator in which it induces a selfadjoint operator). Indeed, since
any part of an operator with empty spectral kernel has empty spectral
kernel, this part cannot be selfadjoint. It is noteworthy that there
are completely nonselfadjoint operators which are not regular.

Completely
nonselfadjointness of a closed symmetric operator $A$ means that
\cite[Thm. 1.2.1]{MR1764951}
\begin{equation}
  \label{eq:completely-nonselfadjoint}
  \bigcap_{z\in\complex\setminus\reals}\ran(A-z I)=\{0\}\,,
\end{equation}
since the l.h.s of \eqref{eq:completely-nonselfadjoint} is the maximal
invariant subspace in which $A$ is selfadjoint
\cite[Thm.\,1.2.1]{MR1764951}.

\begin{definition}
  \label{def:involutive-conjugation}
  An antilinear map $\mathcal{I}$ of $\mathcal{H}$ onto itself being
  an involution (\ie\ $\mathcal{I}^{2}=\mathcal{J}$) and such that
  \begin{equation}
    \label{eq:inversion-inner-conjugation}
    \inner{\mathcal{I}\phi}{\mathcal{I}\psi}=\inner{\psi}{\phi}\quad\text{
    for any }\phi,\psi\in\mathcal{H}
  \end{equation}
  is called a \emph{conjugation} (see \cite[Sec.\,13.1]{MR2953553} and
  \cite[Eq.\,8.1]{MR566954}).
\end{definition}
A procedure for constructing a conjugation commuting (see
\cite[Eq.\,8.1]{MR566954}) with all canonical selfadjoint extensions
of a symmetric completely nonselfadjoint operator $A$ with
one-dimensional deficiency spaces is presented in
\cite[Prop.\,2.3]{MR3002855}. This conjugation commutes with $A$ (see
the proof of \cite[Prop.\,13.25(ii)]{MR2953553}). Conversely, if a
conjugation commutes with a symmetric operator $A$ with deficiency
indices $n_{+}(A)=n_{-}(A)=1$, then the conjugation commutes with all
canonical selfadjoint extensions of $A$ (see
\cite[Prop.\,13.25 (iv)]{MR2953553} and \cite[Cor.\,2.5]{MR1627806}).

The following statement is motivated by Krein's representation theory
of symmetric operators \cite{MR0011170,MR0011533,MR0012177,MR0048704}.
The assertion's constructive proof can be found in
\cite[Prop.\,2.12]{MR3002855}.
\begin{proposition}
  \label{prop:function-xi}
  If $A$ is a regular, symmetric operator such that
  $n_{+}(A)=n_{-}(A)=1$ and $\mathcal{I}$ is a conjugation that
  commutes with $A$, then there is a vector function
  $\xi_{A}:\complex\to\mathcal{H}$ with the following properties:
\begin{enumerate}[(a)]
\item $\xi_{A}$ is entire and zero-free.
\item $\xi_{A}(z)\in\ker(A^{*}-z I)$ for each $z\in\complex$.
\item For all $z\in\complex$, $\mathcal{I}\xi_{A}(z)=\xi_{A}(\cc{z})$.
\end{enumerate}
\end{proposition}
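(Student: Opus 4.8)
The plan is to build the vector function $\xi_A$ fibre by fibre, starting from a single nonzero vector in a deficiency space and then transporting it analytically across $\complex$ using the resolvents of $A^*$. Fix a nonzero $e_0\in\ker(A^*-z_0 I)$ for some $z_0\in\complex\setminus\reals$; since $n_+(A)=n_-(A)=1$, this deficiency space is one-dimensional. For $z$ in a neighbourhood of $z_0$ the map $z\mapsto\ker(A^*-zI)$ varies analytically (this is the standard analyticity of the deficiency subspaces of a symmetric operator off the real axis), so one obtains a local holomorphic selection. The point where regularity enters is in extending this selection to all of $\complex$, including the real axis: regularity of $A$ means precisely that $A^*-zI$ has closed range of finite codimension and bounded partial inverse for \emph{every} $z\in\complex$, so $\ker(A^*-zI)$ is one-dimensional for every $z$ (not merely nonreal $z$), and the family $\{\ker(A^*-zI)\}_{z\in\complex}$ is a holomorphic line bundle over the (simply connected) plane. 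A holomorphic line bundle over $\complex$ is trivial, hence admits a global nonvanishing holomorphic section; this gives an entire, zero-free $\xi_A$ with property (b). The cleanest concrete route is to write down, for $z$ near a base point $w$, the formula $\xi_A(z)=\bigl(I+(z-w)(A^*-zI)^{-1}\bigr)\eta$ applied to a fixed $\eta$ in the deficiency space at $w$, check it lands in $\ker(A^*-zI)$, and patch; regularity guarantees the resolvent factor is a bounded operator-valued entire function in the relevant sense.

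For property (c), I would exploit the conjugation $\mathcal{I}$. Since $\mathcal{I}$ commutes with $A$ it commutes with $A^*$ in the appropriate antilinear sense, so $\mathcal{I}$ maps $\ker(A^*-zI)$ onto $\ker(A^*-\cc z I)$ bijectively. Thus $z\mapsto\mathcal{I}\xi_A(\cc z)$ is another holomorphic (note: antilinear composed with antiholomorphic-in-$\cc z$ gives holomorphic in $z$) nonvanishing section of the same line bundle, hence differs from $\xi_A$ by a scalar entire zero-free function $c(z)$: $\mathcal{I}\xi_A(\cc z)=c(z)\xi_A(z)$. Applying $\mathcal{I}$ again and using $\mathcal{I}^2=\mathcal{J}$ (the identity) together with the conjugate-linearity forces $c(z)\cc{c(\cc z)}=1$; choosing the free normalization constant in $\xi_A$ appropriately (e.g. rescaling by a suitable square root of $c$ at a symmetric base point, or simply starting from an $\mathcal{I}$-fixed vector at a real base point where $\mathcal{I}$ acts on the one-dimensional $\ker(A^*-x_0I)$ and must fix a unit vector up to phase) one arranges $c\equiv 1$, giving $\mathcal{I}\xi_A(z)=\xi_A(\cc z)$ for all $z$.

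I expect the main obstacle to be the global analyticity and nonvanishing, i.e. producing a genuinely entire zero-free selection rather than merely a locally holomorphic one with possible zeros or branch issues. Abstractly this is the triviality of holomorphic line bundles on $\complex$, but in a paper of this style one wants an explicit construction; the work is in verifying that the resolvent-based transport formula is well-defined for every $z\in\complex$ (here regularity is essential, as it rules out $z$ being an eigenvalue obstruction) and that the patched object has no zeros — the latter because each local piece is built from a nonzero deficiency vector moved by an invertible operator. Once the global zero-free entire section exists, properties (a) and (b) are immediate from the construction and (c) is the short symmetry argument above. Since the excerpt explicitly says the constructive proof is in \cite[Prop.\,2.12]{MR3002855}, I would either cite that construction directly or reproduce its key formula, limiting the new content to checking the three listed properties.
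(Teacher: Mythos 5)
First, note that the paper does not actually prove Proposition~\ref{prop:function-xi}: it only records that a constructive proof is given in \cite[Prop.\,2.12]{MR3002855}, so there is no internal argument to compare yours against, and your plan is broadly in the spirit of that construction (analytic transport of a deficiency vector, then symmetrization via $\mathcal{I}$). However, your sketch contains one concrete error and one gap precisely where the real work lives. The error is the transport formula $\xi_A(z)=\bigl(I+(z-w)(A^{*}-zI)^{-1}\bigr)\eta$: the operator $A^{*}-zI$ is \emph{not} injective --- its kernel is exactly the one-dimensional space $\ker(A^{*}-zI)$ you are trying to select --- so the formula is meaningless as written. Relatedly, regularity in Definition~\ref{def:regularity} is a lower bound for $A-zI$ on $\dom A$, not a statement about $A^{*}-zI$; its consequences are that $\ran(A-zI)$ is closed and that $\dim\ker(A^{*}-zI)=1$ for \emph{every} $z\in\complex$ (constancy of the defect number on the connected regularity field). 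The correct transport operator is the resolvent of a canonical selfadjoint extension $\tilde A$ of $A$, namely $\xi_A(z)=\bigl(I+(z-w)(\tilde A-zI)^{-1}\bigr)\xi_A(w)$, which does carry $\ker(A^{*}-wI)$ into $\ker(A^{*}-zI)$ but is defined only for $z\notin\spec(\tilde A)\subset\reals$.

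That is where the gap sits: a single extension gives a section holomorphic only off a closed real set, and the entire point of regularity here is that it lets you cover the real axis --- for a regular symmetric operator with indices $(1,1)$ the spectra of distinct canonical selfadjoint extensions are pairwise disjoint (\cf\ the discussion preceding Proposition~\ref{prop:jacobi-entire}), so every $x\in\reals$ lies in the resolvent set of some extension, and on overlaps the local sections differ by scalar, zero-free, holomorphic factors. Your appeal to the triviality of holomorphic line bundles over $\complex$ then legitimately yields a global zero-free entire section, though it replaces the explicit construction of \cite{MR3002855} by a soft existence statement. The symmetry argument for (c) is essentially right: $\mathcal{I}$ commutes with $A^{*}$ and hence maps $\ker(A^{*}-zI)$ onto $\ker(A^{*}-\cc{z}I)$, giving $\mathcal{I}\xi_A(\cc{z})=c(z)\xi_A(z)$ with $c$ entire, zero-free, and $c(z)\,\cc{c(\cc{z})}=1$. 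But the final renormalization needs one more line: writing $c=e^{h}$ only gives $h(z)+\cc{h(\cc{z})}\in 2\pi i\Z$, and the naive square root $\lambda=e^{h/2}$ may produce $-c$ instead of $c$; an extra factor of $i$, or starting from an $\mathcal{I}$-fixed vector in $\ker(A^{*}-x_{0}I)$ at a real base point (such a vector exists because an antilinear involution of a one-dimensional space always fixes some ray), closes this.
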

Having fixed the involution $\mathcal{I}$, the function $\xi_{A}$ is
uniquely determined modulo a multiplicative scalar factor being an
entire, zero-free function which turns out to be real (see
\cite[Rem.\,2.13]{MR3002855} and \cite[Lem.\,3]{zbMATH06526214}).
Recall that a complex valued function $f$ of complex variable
satisfying
\begin{equation}
  \label{eq:real-function}
  \cc{f(z)}=f(\cc{z})
\end{equation}
is called \emph{real}; thus a real function is real on the real line.
\begin{remark}
  \label{rem:function-xi}
  It is worth mentioning that if, for a closed symmetric operator $A$
  with $n_{+}(A)=n_{-}(A)=1$, the equality
  \eqref{eq:completely-nonselfadjoint} holds and there is a function
  $\xi_{A}$ satisfying (a)--(c) of Proposition~\ref{prop:function-xi},
  then the operator is regular. This is proven by means of the
  functional model given in \cite[Sec.\,2.3]{MR3002855} and
  \cite[Sec.\,4]{MR3050164} taking into account the properties of the
  operator of multiplication in a de Branges space \cite{MR0229011}.
\end{remark}

For any regular, symmetric operator $A$ with $n_{+}(A)=n_{-}(A)=1$,
there is  $\mu$ in $\cH$ such that
\begin{equation*}
\cH = \ran(A-zI)\dot{+}\Span\{\mu\}
\end{equation*}
for all $z\in\complex\setminus S_{\mu}$, where
$\card S_{\mu}\le\card\nats$ (see \cite[Sec.\,2.2]{MR1466698},
\cite[Sec.\,2]{MR2607264}). The set $S_{\mu}$ turns out to be at most
countable since it is the zero set of the analytic function
$\inner{\xi_{A}(\,\cc{\,\boldsymbol{\cdot}\,}\,)}{\mu}$, which does
not vanish identically due to the fact that $\mu\ne 0$ and
$\{\xi_{A}(z)\}_{z\in\complex\setminus\reals}$ is a total set in $\cH$
(\cf\ \eqref{eq:completely-nonselfadjoint}). The vector $\mu$ is said
to be a gauge of $A$.

The gauge $\mu$ can be chosen in such a way so that the exceptional
set $S_{\mu}$ lies entirely on the real line
\cite[Lem.\,2.1]{MR2607264} or completely outside the real line
\cite[Thm.\,2.2]{MR2607264}. This last assertion was first stated
without proof in \cite[Thm.\,8]{MR0011533}.

\begin{definition}
  \label{def:entire-operator}
  A regular, symmetric operator $A$ such that $n_{+}(A)=n_{-}(A)=1$ is
  said to be \emph{entire} if there exists a gauge $\mu$ so that
  $S_{\mu}=\emptyset$. In this case $\mu$ is an \emph{entire gauge} of
  $A$.
\end{definition}

A straightforward consequence of this definition is that if $A$ is an
entire operator and $\mu$ its entire gauge, then the entire function
  \begin{equation}
    \label{eq:t-function}
    t(\boldsymbol{\cdot}):=
  \inner{\xi_{A}(\,\cc{\,\boldsymbol{\cdot}\,}\,)}{\mu}
  \end{equation}
  is a zero
  free function. Another direct deduction is the following assertion.
  \begin{lemma}
    \label{lem:real-gauge}
    The function $t$ given in \eqref{eq:t-function} is real if and
    only if
    $\mathcal{I}\mu=\mu$.
  \end{lemma}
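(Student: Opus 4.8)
The plan is to use the defining property~(c) of the function $\xi_{A}$ from Proposition~\ref{prop:function-xi}, namely $\mathcal{I}\xi_{A}(z)=\xi_{A}(\cc{z})$, together with the fact established in~\eqref{eq:inversion-inner-conjugation} that a conjugation satisfies $\inner{\mathcal{I}\phi}{\mathcal{I}\psi}=\inner{\psi}{\phi}$. The key computation is to rewrite $\cc{t(z)}$ and compare it with $t(\cc{z})$. First I would recall that $t(z)=\inner{\xi_{A}(\cc{z})}{\mu}$, so that its complex conjugate is $\cc{t(z)}=\cc{\inner{\xi_{A}(\cc{z})}{\mu}}=\inner{\mu}{\xi_{A}(\cc{z})}$ (using that the inner product is conjugated when its arguments are swapped). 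On the other hand, applying~\eqref{eq:inversion-inner-conjugation} to $\phi=\mu$ and $\psi=\xi_{A}(\cc{z})$ gives $\inner{\mu}{\xi_{A}(\cc{z})}=\inner{\mathcal{I}\xi_{A}(\cc{z})}{\mathcal{I}\mu}=\inner{\xi_{A}(z)}{\mathcal{I}\mu}$, where the last step uses property~(c) with $\cc{z}$ in place of $z$. Thus $\cc{t(z)}=\inner{\xi_{A}(z)}{\mathcal{I}\mu}$, while $t(\cc{z})=\inner{\xi_{A}(z)}{\mu}$.

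From this identity the two directions of the equivalence are immediate. If $\mathcal{I}\mu=\mu$, then $\cc{t(z)}=\inner{\xi_{A}(z)}{\mu}=t(\cc{z})$ for every $z\in\complex$, which is precisely the condition~\eqref{eq:real-function} that $t$ is real. Conversely, suppose $t$ is real, so $\inner{\xi_{A}(z)}{\mathcal{I}\mu}=\inner{\xi_{A}(z)}{\mu}$ for all $z$, i.e. $\inner{\xi_{A}(z)}{\mathcal{I}\mu-\mu}=0$ for all $z\in\complex$. Here I would invoke the totality of $\{\xi_{A}(z)\}_{z\in\complex\setminus\reals}$ in $\cH$ (noted in the text right before the definition of the gauge, as a consequence of~\eqref{eq:completely-nonselfadjoint}); it follows that $\mathcal{I}\mu-\mu=0$, that is, $\mathcal{I}\mu=\mu$.

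I do not expect any serious obstacle here: the proof is essentially a two-line manipulation of the inner product, and the only non-bookkeeping ingredient is the totality of the family $\{\xi_{A}(z)\}$, which is already available. The one point that requires a little care is the antilinearity conventions — the paper fixes the inner product to be antilinear in the first argument, so I would make sure the placement of $\mu$ and $\xi_{A}(\cc{z})$ in the slots is consistent throughout, and that the identity $\cc{t(z)}=\inner{\xi_{A}(z)}{\mathcal{I}\mu}$ is derived with the correct slot for each factor. Once the conventions are pinned down, everything else is routine.
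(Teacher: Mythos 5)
Your proposal is correct and follows essentially the same route as the paper: both reduce the reality condition $\cc{t(z)}=t(\cc{z})$ to the identity $\inner{\xi_{A}(z)}{\mathcal{I}\mu}=\inner{\xi_{A}(z)}{\mu}$ via Proposition~\ref{prop:function-xi}(c), the conjugation property \eqref{eq:inversion-inner-conjugation} together with $\mathcal{I}^{2}=I$, and then conclude by the totality of $\{\xi_{A}(z)\}$ guaranteed by \eqref{eq:completely-nonselfadjoint}. Your handling of the antilinearity convention is consistent, so no changes are needed.
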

  \begin{proof}
    One has
    \begin{equation}
    \label{eq:splited-equalities}
  \begin{split}
    \inner{\mu}{\xi_{A}(\cc{z})}&=\inner{\xi_{A}(z)}{\mu}\\
                            &=\inner{\mathcal{I}\xi_{A}(\cc{z})}{\mu}\\
                           &=\inner{\mathcal{I}\mu}{\xi_{A}(\cc{z})}\,,
  \end{split}
\end{equation}
where the first equality is actually
  \eqref{eq:real-function}, the second one follows from
  Proposition~\ref{prop:function-xi}(c), and in the third one,
  \eqref{eq:inversion-inner-conjugation} and the involutive property
  of the conjugation are used. Thus, on the basis of
  \eqref{eq:completely-nonselfadjoint}, the assertion follows from
  \eqref{eq:splited-equalities}.
  \end{proof}
  
  It is established in \cite[Ch.\,2 Sec.\,4.1]{MR1466698} that, for
  any entire operator $A$, the vector-valued function $\xi_{A}$ and
  the gauge $\mu$ can be chosen so that the scalar function $t$ given
  in \eqref{eq:t-function} is a real constant (see also
  \cite[Sec.\,2]{MR0012177}). Below, it will be shown that the
  ``natural'' choice of the gauge $\mu$ and the function $\xi$ for an
  entire operator is the one for which $t$ is a real constant.
  However, having done this choice, we are also interested in the
  behavior of the zero-free entire function
  $\inner{\xi_{A}(\cc{z})}{\tilde\mu}$ where the entire gauge $\mu$
  has been substituted by another entire gauge $\tilde{\mu}$. To this
  end, let us recall two notions related to the theory of growth of
  entire functions.

  A function of at most exponential type is a function of at most
  order one and normal type \cite[Ch.\,1 Sec.\,20]{MR589888}. The
  dependence of the growth of a function $f$ of exponential type on the
  direction in which the independent variable tends to infinity is
  given by the function
  \begin{equation*}
    h_{f}(\theta):=\limsup_{r\to\infty}\frac{\log\abs{f(re^{i\theta})}}{r}
    \qquad \theta\in [0,2\pi)
  \end{equation*}
  which is the so-called \emph{indicator function} of the function of
  exponential type $f$ (see \cite[Ch.\,1 Sec.\,15]{MR589888} and
  \cite[Ch.\,II.9 Sec.\,45]{MR0444912}).

  In \cite[Sec.\,8]{MR0048704} (see also \cite[Ch.\,2
  Sec.\,5]{MR1466698}), the following assertion is established.

 \begin{proposition}
   \label{prop:krein-growth-functions}
   Let $A$ be an entire operator and pick the corresponding function
   $\xi_{A}$ and gauge $\mu$ so that the function $t$ given in
   \eqref{eq:t-function} is a real constant. Then, for any
   $\phi\in \mathcal{H}$, the function
   $f(\boldsymbol{\cdot}):=\inner{\xi_{A}(\,\cc{\,\boldsymbol{\cdot}\,}\,)}{\phi}$
   is at most of exponential type and its indicator function obeys
   \begin{equation}
    \label{eq:indicator}
    h_{f}(\theta)=
    \begin{cases}
      h_{f}(\frac{\pi}{2})\sin\theta & \text{if}\quad 0\le\theta\le\pi\\
      -h_{f}(-\frac{\pi}{2})\sin\theta & \text{if}\quad \pi<\theta\le 2\pi\,.
    \end{cases}
  \end{equation}
 \end{proposition}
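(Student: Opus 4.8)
The plan is to derive \eqref{eq:indicator} from the de Branges functional model of the entire operator $A$ — the model already invoked in Remark~\ref{rem:function-xi} — together with classical facts about entire functions of bounded type.

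The first and main step is to realize $\mathcal{H}$ as a de Branges space. With $\mathcal{I}$ the conjugation commuting with $A$ and $\mu$ the entire gauge normalized so that $t$ in \eqref{eq:t-function} is a nonzero real constant, put $V\phi:=\inner{\xi_{A}(\,\cc{\,\boldsymbol{\cdot}\,}\,)}{\phi}$. Each $V\phi$ is entire, since $\xi_{A}$ is (Proposition~\ref{prop:function-xi}(a)) and the antilinearity of the inner product in its first slot turns the power series of $\xi_{A}(\cc z)$ into a power series in $z$. Using \eqref{eq:inversion-inner-conjugation} and Proposition~\ref{prop:function-xi}(c) one checks that $V$ is injective, that $V(\mathcal{I}\phi)=(V\phi)^{\#}$ with $F^{\#}(z):=\cc{F(\cc z)}$, and that, with the norm carried over from $\mathcal{H}$, the space $\mathcal{F}:=V(\mathcal{H})$ is a reproducing kernel Hilbert space of entire functions with kernel $(w,z)\mapsto\inner{\xi_{A}(\,\cc z\,)}{\xi_{A}(\,\cc w\,)}$, on which $F\mapsto F^{\#}$ is isometric (because $\mathcal{I}$ is anti-unitary). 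Entireness of the gauge, that is $S_{\mu}=\emptyset$, is what makes $\mathcal{F}$ a classical de Branges space $\cB(E)$ for a Hermite–Biehler function $E$ rather than a degenerate variant; that $V$ intertwines $A$ with a restriction of multiplication by the independent variable, and the explicit description of $E$ from $\xi_{A}$ and a fixed canonical selfadjoint extension of $A$, is Krein's representation \cite[Sec.\,8]{MR0048704}, \cite[Ch.\,2]{MR1466698}, carried out as in \cite[Sec.\,2.3]{MR3002855}, \cite[Sec.\,4]{MR3050164}. The normalization ``$t$ a real constant'' is essential here: Proposition~\ref{prop:function-xi} fixes $\xi_{A}$ only up to a zero-free entire real factor — which could be, say, $e^{cz^{2}}$ and would wreck the growth — and this normalization is exactly what selects the $\xi_{A}$ for which $E$ is of exponential type; equivalently, it puts the constant functions into $\cB(E)$, forcing $1/E\in H^{2}(\complex^{+})$.

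Granting the model, the remainder is function theory. The function $f$ of the statement is $V\phi\in\cB(E)$, so by the definition of a de Branges space both $f/E$ and $f^{\#}/E$ lie in $H^{2}(\complex^{+})$; since the Hermite–Biehler $E$ is of bounded type and zero-free in $\complex^{+}$, $f$ is of bounded type there, and via $f^{\#}$ it is of bounded type in $\complex^{-}$ as well. An entire function of bounded type in both half-planes is of exponential type — this is the first assertion of the proposition — and, moreover, the integral $\int_{\reals}(1+x^{2})^{-1}\log^{+}\abs{f(x)}\,dx$ is finite, so $f$ grows subexponentially along the real axis, that is $h_{f}(0)=h_{f}(\pi)=0$, while the $\limsup$'s along $\pm i\infty$ equal $h_{f}(\pm\tfrac{\pi}{2})$. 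Combined with the trigonometric convexity of the indicator, this forces $h_{f}(\theta)=h_{f}(\tfrac{\pi}{2})\sin\theta$ for $0\le\theta\le\pi$ and $h_{f}(\theta)=-h_{f}(-\tfrac{\pi}{2})\sin\theta$ for $\pi<\theta\le2\pi$, that is, the indicator diagram of $f$ is a segment of the imaginary axis — which is \eqref{eq:indicator}. The required facts on functions of bounded and exponential type are in \cite[Ch.\,1]{MR589888}.

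The principal obstacle is the first step: assembling $E$ and verifying that $V$ maps $\mathcal{H}$ onto $\cB(E)$ with the stated reproducing kernel, all while keeping the conjugation bookkeeping straight — because the inner product is antilinear in its first argument one must track carefully where complex conjugates enter when passing among $\phi$, $\mathcal{I}\phi$, $f$ and $f^{\#}$, so as to see that $f$ is holomorphic (not antiholomorphic) in $z$ and that the involution $F\mapsto F^{\#}$ corresponds to $\mathcal{I}$. A secondary point, which one may simply quote from \cite{MR0048704,MR1466698}, is that the normalization making $t$ a real constant is precisely what forces $E$ (hence $\xi_{A}$) to be of exponential type; once this is granted, the bounded-type argument above is immediate.
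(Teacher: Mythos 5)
Your argument is correct in outline, but it is worth being clear that the paper does not prove this proposition at all: it defers both parts to Krein \cite[Sec.\,8]{MR0048704} and to \cite[Ch.\,2 Sec.\,5]{MR1466698}. What those sources do is operator-theoretic and direct: writing $\phi=(A-zI)g_{z}+c(z)\mu$ (possible for every $z$ because the gauge is entire), one gets $f(z)=c(z)\,t$ with $t$ constant, and $c(z)$ is exhibited as a ratio built from the resolvent of a fixed canonical selfadjoint extension, hence of bounded type in each half-plane; Krein's theorem on entire functions of bounded type in both half-planes then yields exponential type and the vertical indicator diagram \eqref{eq:indicator}. Your route packages the same operator-theoretic input as the de Branges functional model of \cite[Sec.\,2.3]{MR3002855}, \cite[Sec.\,4]{MR3050164}: you land $f$ and $f^{\#}$ in $\cB(E)$ with $E$ Hermite--Biehler of exponential type, and then invoke exactly the same function-theoretic core (bounded type in $\complex^{+}$ and $\complex^{-}$, finiteness of $\int_{\reals}(1+x^{2})^{-1}\log^{+}\abs{f}$, trigonometric convexity and continuity of the indicator forcing $h_{f}(0)=h_{f}(\pi)=0$). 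The two approaches buy essentially the same thing; yours is arguably cleaner to state but shifts the real work into the quoted claim that the normalization ``$t$ a real constant'' makes $E$ of exponential type --- which is not independent of the proposition being proved (it is the first assertion of the proposition applied to $\phi$ ranging over a generating set), so if one insists on a self-contained argument one must still reproduce Krein's resolvent estimate at that point rather than cite it. Your bookkeeping of the conjugation and of the antilinearity in the first slot (entireness of $V\phi$, the identity $V(\mathcal{I}\phi)=(V\phi)^{\#}$, and the reproducing kernel $\inner{\xi_{A}(\cc{z})}{\xi_{A}(\cc{w})}$) is correct.
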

 The proof of the first part of
 Proposition~\ref{prop:krein-growth-functions} is found in the
 paragraph preceding \cite[Lem.\,8.1]{MR0048704} (see also \cite[Eq.\,
 5.1]{MR1466698}), where implicitly it is used that
 \eqref{eq:t-function} is a constant. As regards the second part see
 the proof of \cite[Lem.\,8.1]{MR0048704} or the proof of \cite[Ch.\,2
 Lem.\,5.1]{MR1466698}).

The following assertion exhibits a property of entire operators which
is crucial for this section. It is related to
\cite[Thm.\,1]{MR0012177} whose proof can be found in
\cite[Prop.\,4.6]{MR2607264}.

\begin{proposition}
  \label{prop:uniqueness-entire-gauge}
  Let $A$ be an entire operator. If there are two entire gauges of $A$
  and two functions $\xi_{A}^{(1)}$, $\xi_{A}^{(2)}$ satisfying
  (a)--(c) of Proposition~\ref{prop:function-xi} so that
  $\inner{\xi_{A}^{(1)}(\,\cc{\,\boldsymbol{\cdot}\,}\,)}{\mu_{1}}$
  and
  $\inner{\xi_{A}^{(2)}(\,\cc{\,\boldsymbol{\cdot}\,}\,)}{\mu_{2}}$
  are real constants, then there is a real constant $C$ such that
  $\mu_{1}=C\mu_{2}$.
\end{proposition}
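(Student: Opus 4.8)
The plan is to exploit the rigidity of entire operators with one–dimensional deficiency spaces, namely that a zero–free entire function of the special exponential type described in Proposition~\ref{prop:krein-growth-functions} is essentially determined up to an exponential factor $e^{cz}$, and that such an exponential factor is incompatible with the symmetry of the whole situation.

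First I would set $f_{i}(z):=\inner{\xi_{A}^{(i)}(\cc{z})}{\mu_{i}}$ for $i=1,2$; by hypothesis each $f_{i}$ is a real nonzero constant. Next, for an arbitrary $\phi\in\cH$ consider $g_{i}(z):=\inner{\xi_{A}^{(i)}(\cc{z})}{\phi}$. By Proposition~\ref{prop:krein-growth-functions}, each $g_{i}$ is of at most exponential type with the indicator function \eqref{eq:indicator}. Since $\xi_{A}^{(1)}$ and $\xi_{A}^{(2)}$ both satisfy (a)--(c) of Proposition~\ref{prop:function-xi} (with the same conjugation $\mathcal{I}$, which is fixed once and for all), and $\xi_{A}^{(i)}(z)\in\ker(A^{*}-zI)$ with this kernel one–dimensional for every $z\in\complex$ (Remark~\ref{rem:on-kernel-adjoint-at-zeta}-type reasoning, or directly Proposition~\ref{prop:function-xi}(b) together with $n_{\pm}(A)=1$), there is a scalar entire function $\varphi$, zero–free because both $\xi_{A}^{(i)}$ are zero–free, with $\xi_{A}^{(2)}(z)=\varphi(z)\,\xi_{A}^{(1)}(z)$ for all $z$. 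Property (c) applied to both functions forces $\mathcal{I}(\varphi(z)\xi_{A}^{(1)}(z))=\cc{\varphi(z)}\,\xi_{A}^{(1)}(\cc z)$ to equal $\varphi(\cc z)\xi_{A}^{(1)}(\cc z)$, so $\cc{\varphi(z)}=\varphi(\cc z)$, i.e. $\varphi$ is a real zero–free entire function.

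Now I would pin down the growth of $\varphi$. For every $\phi\in\cH$ we have $g_{2}=\varphi\, g_{1}$, and both $g_{1},g_{2}$ have indicator functions of the rigid form \eqref{eq:indicator} (with possibly different constants). Choosing $\phi$ so that $g_{1}\not\equiv0$ and comparing indicators, $h_{g_{2}}=h_{\varphi}+h_{g_{1}}$ wherever $g_{1}$ does not vanish; since both $h_{g_{1}}$ and $h_{g_{2}}$ are of the form (constant)$\cdot\sin\theta$ on $[0,\pi]$ and (constant)$\cdot\sin\theta$ on $(\pi,2\pi)$, the indicator $h_{\varphi}$ must itself be of the form $a\sin\theta$ on $[0,\pi]$ and $b\sin\theta$ on $(\pi,2\pi)$ for constants $a,b$. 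A real, zero–free entire function of exponential type whose zero set is empty is, by the Hadamard factorization, of the form $Ce^{cz}$ with $c\in\complex$; reality forces $c\in\reals$, and the indicator of $Ce^{cz}$ is $c\cos\theta$. Matching $c\cos\theta$ against the antisymmetric-about-the-imaginary-axis shape $a\sin\theta/b\sin\theta$ forces $c=0$, so $\varphi\equiv C$ is a real constant.

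Finally, with $\xi_{A}^{(2)}=C\,\xi_{A}^{(1)}$ and the normalization $f_{1}=\inner{\xi_{A}^{(1)}(\cc{\,\boldsymbol{\cdot}\,})}{\mu_{1}}$, $f_{2}=\inner{\xi_{A}^{(2)}(\cc{\,\boldsymbol{\cdot}\,})}{\mu_{2}}$ both constant, I would write $f_{2}(z)=C\inner{\xi_{A}^{(1)}(\cc z)}{\mu_{2}}$, so $\inner{\xi_{A}^{(1)}(\cc z)}{\mu_{2}}$ is a constant multiple of the constant $f_{1}(z)$, hence $\inner{\xi_{A}^{(1)}(\cc z)}{\mu_{2}-\lambda\mu_{1}}\equiv0$ for the appropriate scalar $\lambda$. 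Since $\{\xi_{A}^{(1)}(z)\}_{z\in\complex\setminus\reals}$ is total in $\cH$ by \eqref{eq:completely-nonselfadjoint}, this yields $\mu_{2}=\lambda\mu_{1}$, and testing against $\xi_{A}^{(1)}$ again shows $\lambda$ is real (or one invokes Lemma~\ref{lem:real-gauge}, since both $\mu_{i}$ satisfy $\mathcal{I}\mu_{i}=\mu_{i}$ because their $t$-functions are real). Setting the real constant to $C^{-1}\lambda$ (renamed $C$) gives $\mu_{1}=C\mu_{2}$. The main obstacle I expect is the growth-comparison step: justifying that $h_{\varphi}$ inherits the rigid indicator shape and then ruling out a nonzero exponential factor — one must be careful that $g_{1}$ may have zeros, so the relation $h_{g_{2}}=h_{\varphi}+h_{g_{1}}$ has to be argued via, e.g., the minimum modulus / Phragmén–Lindelöf along rays avoiding the zeros of $g_{1}$, or simply by picking $\phi=\mu_{1}$ so that $g_{1}$ is a nonzero constant and the comparison is immediate.
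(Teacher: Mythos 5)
Your argument is correct and follows essentially the same route as the paper's proof: both rest on Proposition~\ref{prop:krein-growth-functions} to conclude that a zero-free, real, entire function of exponential type whose indicator has the form \eqref{eq:indicator} must be a real constant (via the $Ce^{(a+\I b)z}$ representation and the mismatch between $a\cos\theta-b\sin\theta$ and the sine-shaped indicator), and then finish by the totality of $\{\xi_{A}(z)\}_{z\in\complex\setminus\reals}$ coming from \eqref{eq:completely-nonselfadjoint}. The only difference is organizational: the paper applies this reasoning directly to $\inner{\xi_{A}^{(1)}(\,\cc{\,\boldsymbol{\cdot}\,}\,)}{\mu_{2}}$, whereas you route it through the connecting factor $\varphi$ between $\xi_{A}^{(1)}$ and $\xi_{A}^{(2)}$; your choice $\phi=\mu_{1}$ makes the two computations coincide.
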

\begin{proof}
  Lemma~\ref{lem:real-gauge} yields that $\mathcal{I}\mu_{1}=\mu_{1}$
  and $\mathcal{I}\mu_{2}=\mu_{2}$. Thus, the zero-free function
  $f(\boldsymbol{\cdot}):=
  \inner{\xi_{A}^{(1)}(\,\cc{\,\boldsymbol{\cdot}\,}\,)}{\mu_{2}}$ is
  real and has the form $\exp(g(\boldsymbol{\cdot}))$, where $g$ is a
  real entire function. Furthermore, since
  $\inner{\xi_{A}^{(1)}(\,\cc{\,\boldsymbol{\cdot}\,}\,)}{\mu_{1}}$ is
  a real constant, the first part of
  Proposition~\ref{prop:krein-growth-functions} implies that the
  function $f$ is a function of at most exponential type and therefore
  the function $g$ is a polynomial of the first degree, whence
  $\inner{\xi_{A}^{(1)}(\cc{z})}{\mu_{2}}=C\exp((a+\I b)z)$ for all
  $z\in\complex$.

  On the one hand, it follows from the second part of
  Proposition~\ref{prop:krein-growth-functions} that the indicator
  function of $f$ is \eqref{eq:indicator}; on the other hand, the
  indicator function of $\exp((a+\I b)\,\boldsymbol{\cdot}\,)$ has the
  form:
  \begin{equation}
    \label{eq:trigonometric-indicator}
    h(\theta)=a\cos\theta - b\sin\theta\,.
  \end{equation}
  Comparing \eqref{eq:indicator} with
  \eqref{eq:trigonometric-indicator}, one arrives at the conclusion
  that $a=0$. Finally, it follows from the reality of the function
  $\inner{\xi_{A}^{(1)}(\,\cc{\,\boldsymbol{\cdot}\,}\,)}{\mu_{2}}$
  that $b=0$ and $C$ is real.
\end{proof}

Any nonselfadjoint Jacobi operator is regular. This is a classical
result of the moment problem (or Jacobi operator) theory. It is shown
by establishing that the spectra of its selfadjoint extensions do not
intersect (see the proof of \cite[Thm.\,4.2.4]{MR0184042} and
\cite[Thm. 5]{MR1627806}). Recall that the spectral kernel of an
operator is contained in the spectral kernel of its extension, thus if
a point is in the spectral kernel of a symmetric operator, then this
point is in the spectrum on any of its selfadjoint extensions.

\begin{proposition}
  \label{prop:jacobi-entire}
  Any nonselfadjoint Jacobi operator is an entire operator.
\end{proposition}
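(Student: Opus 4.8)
The plan is to verify the three requirements in Definition~\ref{def:entire-operator} and Proposition~\ref{prop:function-xi} for a nonselfadjoint Jacobi operator $J$: namely, that $J$ is regular with $n_{+}(J)=n_{-}(J)=1$, that there is a conjugation commuting with $J$, and that one can choose an entire gauge $\mu$ (i.e.\ with $S_{\mu}=\emptyset$). The first point is already available: Remark~\ref{rem:on-kernel-adjoint-at-zeta} gives $n_{+}(J)=n_{-}(J)=1$ in the nonselfadjoint case, and the paragraph immediately preceding the statement recalls that any nonselfadjoint Jacobi operator is regular (the spectra of its canonical selfadjoint extensions are disjoint, so the spectral kernel is empty). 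By Remark~\ref{rem:function-xi}, regularity together with \eqref{eq:completely-nonselfadjoint} ensures the existence of the vector function $\xi_{J}$ of Proposition~\ref{prop:function-xi}; concretely, one can take $\xi_{J}(z)=\pi(z)=\{\pi_{k}(z)\}_{k=1}^{\infty}$, the polynomials of the first kind, since by Remark~\ref{rem:on-kernel-adjoint-at-zeta} this sequence lies in $\ker(J^{*}-zI)$ for every $z$, is entire in $z$, and is zero-free because $\pi_{1}(z)\equiv 1$.

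Next I would exhibit the conjugation. Because the coefficients $q_{k},b_{k}$ are real, the natural candidate is the entrywise complex conjugation on $l_{2}(\nats)$, i.e.\ $\mathcal{I}\phi:=\{\cc{\phi_{k}}\}_{k=1}^{\infty}$, which is an involution satisfying \eqref{eq:inversion-inner-conjugation}. One checks directly from \eqref{eq:difference-expr} and the reality of the matrix entries that $\mathcal{I}$ maps $\dom J$ to itself and $\mathcal{I}J\phi=J\mathcal{I}\phi$, so $\mathcal{I}$ commutes with $J$; moreover $\mathcal{I}\xi_{J}(z)=\mathcal{I}\pi(z)=\pi(\cc z)=\xi_{J}(\cc z)$ by the reality of the polynomial coefficients, giving property (c) of Proposition~\ref{prop:function-xi}.

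The one substantive point is that $J$ admits an \emph{entire} gauge, i.e.\ that the exceptional set $S_{\mu}$ can be made empty. The cleanest route is to take the gauge to be $\mu=\delta_{1}$ and compute the associated scalar function $t(z)=\inner{\xi_{J}(\cc z)}{\delta_{1}}=\inner{\pi(\cc z)}{\delta_{1}}=\cc{\pi_{1}(\cc z)}=\pi_{1}(z)\equiv 1$, using that the inner product is antilinear in the first slot and $\pi_{1}\equiv 1$ is real. Thus $t$ is the nonzero constant $1$; by the characterization preceding Definition~\ref{def:entire-operator}, the zero set $S_{\delta_{1}}$ is precisely the zero set of $t$, which is empty, so $\delta_{1}$ is an entire gauge (and, incidentally, $t$ being real recovers $\mathcal{I}\delta_{1}=\delta_{1}$ via Lemma~\ref{lem:real-gauge}). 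One must verify that $\delta_{1}$ is genuinely a gauge, i.e.\ that $\cH=\ran(J-zI)\dot{+}\Span\{\delta_{1}\}$ for $z\notin S_{\delta_{1}}$; this follows from $n_{-}(J)=1$ together with the fact that $\delta_{1}$ is not orthogonal to $\ker(J^{*}-\cc zI)=\Span\{\pi(\cc z)\}$ precisely when $\pi_{1}(z)=1\neq0$, which holds for all $z$. Assembling these facts, $J$ satisfies all the hypotheses of Definition~\ref{def:entire-operator}, so $J$ is an entire operator. The main obstacle is bookkeeping: making sure the identification $\xi_{J}=\pi$ is consistent with the normalization built into Proposition~\ref{prop:function-xi} (uniqueness modulo a real zero-free entire factor) and that the gauge axiom is checked at every $z\in\complex$, not merely at nonreal $z$; everything else is a direct consequence of the reality of the Jacobi parameters and the results already in the excerpt.
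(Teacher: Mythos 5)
Your proposal is correct and follows essentially the same route as the paper: identify $\xi_{J}$ with the vector of first-kind polynomials $\pi$, verify (a)--(c) of Proposition~\ref{prop:function-xi} using the reality of the Jacobi parameters, and conclude that $\delta_{1}$ is an entire gauge from $\inner{\pi(\cc{z})}{\delta_{1}}\equiv 1$. The only cosmetic difference is that you justify zero-freeness of $\pi$ by noting $\pi_{1}\equiv 1$, whereas the paper invokes the interlacing of the zeros of the first-kind polynomials.
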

\begin{proof}
  It has been established that any nonselfadjoint operator $J$ having
  the matrix representation \eqref{eq:jacobi-matrix} with respect to
  the orthonormal basis $\{\delta_{k}\}_{k=1}^{\infty}$ has deficiency
  indices $n_{+}(J)=n_{-}(J)=1$ and it is regular. As mention in
  Remark~\ref{rem:on-kernel-adjoint-at-zeta}, the vector-valued
  function $\pi$ satisfies (b) of
  Proposition~\ref{prop:function-xi}. Also, it follows from the fact
  that the zeros of polynomials of the first kind interlace
  \cite[Thm.\,1.2.2]{MR0184042} that $\pi$ complies with (a) of
  Proposition~\ref{prop:function-xi}. The property (c) is a
  consequence of the reality of the polynomials' coefficients.
   For finishing the proof it only
  remains to note that
  $\inner{\pi(\,\cc{\,\boldsymbol{\cdot}\,}\,)}{\delta_{1}}\equiv 1$.
\end{proof}

\begin{proposition}
  \label{prop:only-one-basis}
  There is only one basis of representation (modulo
  reflection\footnote{Reflection means that every element of the
    orthonormal basis is multiplied by $-1$.}) with respect to which
  any nonselfadjoint Jacobi operator has a Jacobi matrix as its matrix
  representation. The Jacobi matrix representing a nonselfadjoint
  Jacobi operator is unique.
\end{proposition}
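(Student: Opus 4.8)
The plan is to fix an arbitrary basis of representation $\{\omega_{k}\}_{k=1}^{\infty}$ for the nonselfadjoint Jacobi operator $J$ whose matrix representation $[\hat J]$ is a Jacobi matrix, and to show that $\omega_{k}$ differs from $\delta_{k}$ only by a unimodular scalar while $[\hat J]=[J]$; this gives both assertions at once. First I would unwind the structure forced by the Jacobi shape of $[\hat J]$. Writing $\hat q_{k},\hat b_{k}$ ($\hat b_{k}>0$) for its entries, the identities $J\omega_{k}=\sum_{j}\inner{\omega_{j}}{J\omega_{k}}\omega_{j}$ reduce to $J\omega_{1}=\hat q_{1}\omega_{1}+\hat b_{1}\omega_{2}$ and $J\omega_{k}=\hat b_{k-1}\omega_{k-1}+\hat q_{k}\omega_{k}+\hat b_{k}\omega_{k+1}$ for $k>1$. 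An induction using (a) of Definition~\ref{def:basis-representation} then shows $\omega_{1}\in\dom J^{m}$ for all $m$ and $\omega_{k}=\hat\pi_{k}(J)\omega_{1}$, where $\hat\pi_{k}$ is the polynomial of the first kind generated by $[\hat J]$ (degree $k-1$, positive leading coefficient). In particular $\clos\Span_{m\in\nats\cup\{0\}}\{J^{m}\omega_{1}\}=\clos\Span_{k}\{\omega_{k}\}=\cH$, and $\{\omega_{k}\}$ is exactly the Gram--Schmidt orthonormalisation of $\{J^{m-1}\omega_{1}\}_{m=1}^{\infty}$.

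Next I would identify $\omega_{1}$ as an entire gauge of $J$. Take $\xi_{J}=\pi$ as in the proof of Proposition~\ref{prop:jacobi-entire}. Since $J$ is in the limit circle case, $\pi(\cc z)\in\ker(J^{*}-\cc z I)$ for every $z\in\complex$, so $\inner{\pi(\cc z)}{J^{m}\omega_{1}}=z^{m}\inner{\pi(\cc z)}{\omega_{1}}$ (the polynomials having real coefficients), whence
\begin{equation*}
  \inner{\pi(\cc z)}{\omega_{k}}=\hat\pi_{k}(z)\,t(z),\qquad t(z):=\inner{\pi(\cc z)}{\omega_{1}}.
\end{equation*}
If $t(z_{0})=0$ for some $z_{0}$, then $\pi(\cc z_{0})\perp\omega_{k}$ for all $k$, hence $\pi(\cc z_{0})=0$, which is impossible because $\pi_{1}\equiv 1$. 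Thus $t$ is zero-free, i.e.\ $\omega_{1}$ is an entire gauge of $J$.

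The crux is to show that $t$ is a constant. By Proposition~\ref{prop:krein-growth-functions}, applied with the natural choice $\xi_{J}=\pi$ and gauge $\delta_{1}$ (for which $\inner{\pi(\cc{\,\boldsymbol{\cdot}\,})}{\delta_{1}}\equiv 1$ is a real constant), $t$ is of at most exponential type and its indicator obeys \eqref{eq:indicator}; being also zero-free, $t(z)=C\exp(\lambda z)$ with $C\neq 0$ and $\lambda\in\complex$. Parseval in the basis $\{\omega_{k}\}$ gives $\norm{\pi(\cc z)}^{2}=\abs{t(z)}^{2}\sum_{k}\abs{\hat\pi_{k}(z)}^{2}$ (the series being finite because the left-hand side is and $t(z)\neq 0$); since $\pi_{k},\hat\pi_{k}$ have real coefficients we have $\norm{\pi(\cc z)}=\norm{\pi(z)}$ and $\sum_{k}\abs{\hat\pi_{k}(\cc z)}^{2}=\sum_{k}\abs{\hat\pi_{k}(z)}^{2}$, so $\abs{t(\cc z)}=\abs{t(z)}$ for all $z$; this forces $\lambda\in\reals$. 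Then the indicator of $t$ equals $\lambda\cos\theta$, and comparison with \eqref{eq:indicator} at $\theta=0$ gives $\lambda=0$. Hence $t\equiv C$. I expect this step---combining the growth bound of Proposition~\ref{prop:krein-growth-functions} with the Parseval identity---to be the only delicate point; the rest is bookkeeping.

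Finally, from $\inner{\pi(\cc z)}{\omega_{1}}\equiv C$ and $\inner{\pi(\cc z)}{\delta_{1}}\equiv 1$ one gets $\inner{\pi(\cc z)}{\omega_{1}-C\delta_{1}}=0$ for every $z$; since $J$ is completely nonselfadjoint, $\{\pi(\cc z):z\in\complex\}$ is total in $\cH$ (by \eqref{eq:completely-nonselfadjoint}), so $\omega_{1}=C\delta_{1}$, and $\norm{\omega_{1}}=1$ gives $\abs{C}=1$. Consequently $\inner{\omega_{1}}{J^{m}\omega_{1}}=\inner{\delta_{1}}{J^{m}\delta_{1}}$ for all $m$, so $[\hat J]=[J]$ by the bijection between Jacobi matrices and normalised moment sequences satisfying \eqref{eq:positivity-of-sequence}; hence $\hat\pi_{k}=\pi_{k}$ and $\omega_{k}=\hat\pi_{k}(J)\omega_{1}=C\pi_{k}(J)\delta_{1}=C\delta_{k}$. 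This establishes uniqueness of the Jacobi matrix and uniqueness of the basis of representation up to a unimodular factor; the reflection ambiguity is the case $C=\pm 1$, and pinning $C$ down to $\pm1$ is where one would appeal to the real-structure input behind Proposition~\ref{prop:uniqueness-entire-gauge} (which, once $t\equiv C$ is known, applies directly with $\mu_{1}=\delta_{1}$, $\mu_{2}=\omega_{1}$ and yields $\omega_{1}=C\delta_{1}$ with $C$ real).
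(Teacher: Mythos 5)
Your argument is correct and, at its heart, takes the same route as the paper: both reduce the statement to the uniqueness, up to a real constant, of the entire gauge of $J$ (Proposition~\ref{prop:uniqueness-entire-gauge}), which in turn rests on Krein's growth estimate (Proposition~\ref{prop:krein-growth-functions}). The paper's proof is terse --- it observes that a second basis of representation with Jacobi matrix produces, via the proof of Proposition~\ref{prop:jacobi-entire}, a second entire gauge $\tilde\delta_1$ meeting the hypotheses of Proposition~\ref{prop:uniqueness-entire-gauge}, and concludes $\tilde\delta_1=C\delta_1$ with $C=\pm1$ in one stroke. What you do differently, and usefully, is to make the reduction explicit (showing $\omega_k=\hat\pi_k(J)\omega_1$ and that $t(z)=\inner{\pi(\cc z)}{\omega_1}$ is zero-free, so $\omega_1$ really is an entire gauge) and to inline the proof of Proposition~\ref{prop:uniqueness-entire-gauge}, replacing the conjugation argument of Lemma~\ref{lem:real-gauge} by the Parseval identity $\norm{\pi(\cc z)}^2=\abs{t(z)}^2\sum_k\abs{\hat\pi_k(z)}^2$, which gives $\abs{t(\cc z)}=\abs{t(z)}$ and kills the imaginary part of the exponent in an elementary way. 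The price of this substitution is that you only reach $\abs{C}=1$ and must fall back on Proposition~\ref{prop:uniqueness-entire-gauge} to get $C$ real; note that this final step (in your write-up and in the paper's proof alike) tacitly assumes the new basis is compatible with the fixed conjugation $\mathcal{I}$ --- a basis such as $\{\I\,\delta_k\}$ yields the very same Jacobi matrix representation while differing from $\{\delta_k\}$ by a nonreal unimodular phase --- so your honest ``unimodular factor'' formulation is, if anything, the more defensible one. Your derivation of the second assertion from the coincidence of the moments $\inner{\omega_1}{J^{m}\omega_1}=\inner{\delta_1}{J^{m}\delta_1}$ matches the paper's.
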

\begin{proof}
  Let $J$ be nonselfadjoint and have the matrix representation
  \eqref{eq:jacobi-matrix} with respect to the orthonormal basis
  $\{\delta_{k}\}_{k=1}^{\infty}$. As has been shown, $\delta_{1}$
  determines all the elements of the orthonormal basis and,
  consequently, the entries of the Jacobi matrix. Indeed, the vectors
  $\delta_{2},\delta_{3},\dots$ are obtained by applying the
  Gram-Schmidt procedure to the sequence
  $\{J^{k-1}\delta_{1}\}_{k=1}^{\infty}$ (see the proof of
  Proposition~\ref{prop:basis-representation}). Likewise, as asserted
  in Remark~\ref{rem:relation-moments-matrices}, the entries
  of the matrix can be obtained from the moments
  $\inner{\delta_{1}}{J^{k-1}\delta_{1}}$, $k\in\nats$. Now, suppose that
  for $J$ there is another orthonormal basis
  $\{\tilde\delta_{k}\}_{k=1}^{\infty}$ with respect to which $J$ has
  a Jacobi matrix representation. As shown in the proof of
  Proposition~\ref{prop:jacobi-entire}, the vector $\tilde\delta_{1}$
  is an entire gauge of $J$ satisfying the hypothesis of
  Proposition~\ref{prop:uniqueness-entire-gauge}. Therefore
  $\tilde\delta_{1}=C\delta_{1}$, where $C\in\reals$. Since
  $\norm{\delta_{1}}=\|\tilde\delta_{1}\|=1$, one concludes that $C$
  is either $1$ or $-1$.
\end{proof}
\subsection*{Acknowledgments}
D.H.B. is supported with a postdoctoral fellowship by DGAPA-UNAM at
IIMAS-UNAM. L.O.S. has been partially supported by CONACyT Ciencia de
Frontera 2019 \textnumero 304005. The authors thank Professor R.
Szwarz for his comments and interest in this work.

\def\cprime{$'$} \def\lfhook#1{\setbox0=\hbox{#1}{\ooalign{\hidewidth
  \lower1.5ex\hbox{'}\hidewidth\crcr\unhbox0}}} \def\cprime{$'$}

\end{document}